\newtheorem{theorem}{Theorem}
\newcommand{\dtv}{d}
\newcommand{\pepLength}{n}
\newcommand{\dripT}{T}
\newcommand{\pep}{x}
\newcommand{\candidatePeps}{D}
\newcommand{\pepDb}{\mathcal{D}}
\newcommand{\indicator}{\mathbf{1}}
\newcommand{\cA}{\mathcal{A}}
\newcommand{\cP}{\mathcal{P}}
\newcommand{\thomson}{\ensuremath{\mathsf{Th}}}
\newcommand{\obsSpec}{s}
\newcommand{\theo}{v}
\newcommand{\insPen}{a_{\mbox{mz}}}
\newcommand{\insPenInt}{a_{\mbox{in}}}
\newcommand{\meanInt}{\mu^{\mbox{in}}}
\newcommand{\mzTol}{w}
\newcommand{\Obsmz}{O^{\mbox{mz}}}
\newcommand{\Obsi}{O^{\mbox{in}}}
\newcommand{\mumz}{\mu^{\mbox{mz}}}
\newcommand{\insVar}{\bar{\sigma}^2}
\newcommand{\mzVar}{\sigma^2}
\DeclareMathOperator*{\argmax}{argmax}
\title{Gradients of Generative Models for Improved Discriminative
  Analysis of Tandem Mass Spectra}
\author{{\bf John T. Halloran} \\
Department of Public Health Sciences \\
University of California, Davis \\
\texttt{jthalloran@ucdavis.edu} \\
\And
{\bf David M. Rocke} \\
Department of Public Health Sciences \\
University of California, Davis \\
\texttt{dmrocke@ucdavis.edu}
}
\begin{document}

\maketitle

\begin{abstract}
Tandem mass spectrometry (\emph{MS/MS}) is a high-throughput
technology used to identify the proteins in a complex biological
sample, such as a drop of blood.  A collection of spectra is generated
at the output of the process, each spectrum of which is representative
of a peptide (protein subsequence) present in the original complex
sample.  In this work, we leverage the log-likelihood gradients of
generative models to improve the identification of such spectra.  In
particular, we show that the gradient of a recently
proposed dynamic Bayesian network (DBN)~\cite{halloran2014uai-drip}
may be naturally employed by
a kernel-based discriminative classifier.  The resulting Fisher kernel
substantially improves upon recent attempts to combine
generative and discriminative models for post-processing analysis,
outperforming all other methods on the evaluated datasets.  We extend
the improved accuracy offered by the Fisher kernel framework to
other search algorithms by introducing Theseus, a DBN representing a
large number of widely used MS/MS scoring functions.  Furthermore,
with gradient ascent and max-product inference at hand, we use
Theseus to learn model parameters without any supervision.
\end{abstract}

\section{Introduction}
In the past two decades, tandem mass spectrometry (\emph{MS/MS}) has
become an indispensable tool for identifying the proteins present in a
complex biological sample.  At the output of a typical MS/MS
experiment, a collection of spectra is produced on the
order of tens-to-hundreds of thousands, each of which is
representative of a protein subsequence, called a \emph{peptide},
present in the original complex sample.  The main challenge in MS/MS
is accurately identifying the peptides responsible for generating each
output spectrum.  

The most accurate identification methods search a database of peptides
to first score peptides, then rank and return the top-ranking such
peptide.  The pair consisting of a scored candidate
peptide and observed spectrum is typically referred to as a
\emph{peptide-spectrum match} (PSM).  However, PSM scores returned
by such database-search methods are often difficult to compare across
different spectra (i.e., they are poorly calibrated), limiting the
number of spectra identified per search~\cite{keich2014importance}.
To combat such poor calibration, post-processors are typically used to
recalibrate PSM scores~\cite{kall:semi-supervised,
  spivak:improvements,  spivak:direct}.  

Recent work has attempted to exploit generative scoring functions for
use with discriminative classifiers to better recalibrate PSM scores;
by parsing a DBN's \emph{Viterbi path} (i.e., the most probable
sequence of random variables), heuristically derived features were
shown to improve discriminative recalibration using support vector
machines (SVMs).  Rather than relying on heuristics, we look towards
the more principled approach of a Fisher
kernel~\cite{jaakkolaFisherKernelNips1998}.  Fisher kernels
allow one to exploit the sequential-modeling strengths of generative
models such as DBNs, which offer vast design flexibility for
representing data sequences of varying length, for use with
discriminative classifiers such as SVMs, which offer superior accuracy
but often require feature vectors of fixed length.  Although the
number of variables in a DBN may vary given different observed
sequences, a Fisher kernel utilizes the fixed-length gradient of the
log-likelihood (i.e., the \emph{Fisher score}) in the feature-space of
a kernel-based classifier.  Deriving the Fisher scores of a DBN for
Rapid Identification of Peptides (DRIP)~\cite{halloran2014uai-drip},
we show that the DRIP Fisher kernel greatly improves upon the previous
heuristic approach; at a strict FDR of
$1\%$ for the presented datasets, the heuristically derived DRIP
features improve accuracy over the base feature set by an average
$6.1\%$, while the DRIP Fisher kernel raises this average improvement to
$11.7\%$ (Table~\ref{table:datasets} in
Appendix~\ref{appendix:impact}), thus
nearly doubling the total accuracy of DRIP post-processing.

Motivated by improvements offered by the DRIP Fisher kernel, we look
to extend this to other models by defining a generative model
representative of the large class of existing scoring
functions~\cite{craig:tandem, eng:approach, eng:comet, kim:msgfPlus,
  howbert:computing, wenger2013proteomics, mcilwain:crux}.  In
particular, we define a DBN (called \emph{Theseus}\footnote{In honor of
  Shannon's magnetic mouse, which could learn to traverse a small
  maze.})
which, given an observed spectrum,
evaluates the universe of all possible PSM scores.  In this work, we
use Theseus to model PSM score distributions with respect to the
widely used XCorr scoring function~\cite{eng:approach}.  The resulting
Fisher kernel once again improves discriminative
post-processing accuracy.  Furthermore, with the generative model in
place, we explore inferring parameters of the modeled scoring function
using max-product inference and gradient-based learning.  The
resulting coordinate ascent learning algorithm outperforms standard
maximum-likelihood learning.  Most importantly, this overall learning
algorithm is unsupervised which, to the authors' knowledge, is the
first MS/MS scoring function parameter estimation procedure not to
rely on any supervision.  We note that this overall training
procedure may be adapted by the many MS/MS search algorithms whose
scoring functions lie in the class modeled by Theseus.
The paper is organized as follows.  We discuss background information in
Section~\ref{section:background}, including the process by which MS/MS
spectra are produced, the means by which spectra are identified, and
related previous work.  In Section~\ref{section:dripFisherScores}, we
extensively discuss the log-likelihood of the DRIP model and derive
its Fisher scores.  In Section~\ref{section:theseus}, we introduce
Theseus and derive gradients of its log-likelihood.  We then discuss
gradient-based unsupervised learning of Theseus parameters and present
an efficient, monotonically convergent coordinate ascent algorithm.
Finally, in Section~\ref{section:results},
we show that DRIP and Theseus Fisher kernels substantially improve
spectrum identification accuracy and that Theseus' coordinate ascent
learning algorithm provides accurate unsupervised parameter
estimation.

\section{Background}
\label{section:background}
\begin{figure*}
\centering
\includegraphics[trim=0.9in 0.0in 1.0in 0.4in,clip=true,
width=0.48\textwidth]{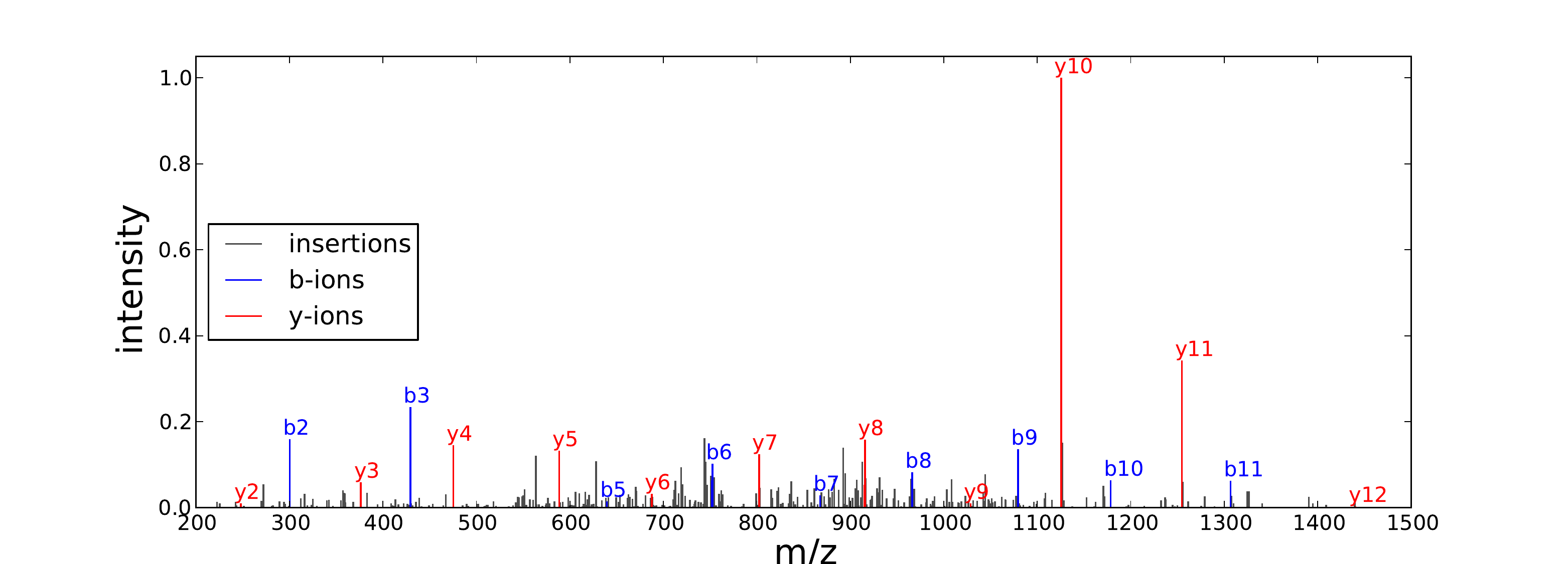}
\caption{{\small Example tandem mass spectrum with precursor charge
  $c(s)=2$ and generating peptide $\pep =
  \mbox{LWEPLLDVLVQTK}$.  Plotted in red and blue are, respectively,
  b- and y-ion peaks (discussed in
  Section~\ref{section:theoreticalSpectra}), while spurious observed
  peaks (called \emph{insertions}) are colored gray.  Note $y_1,
  b_1,b_4,$ and $b_{12}$ are absent fragment ions (called
  \emph{deletions}).}}
\label{fig:exampleSpectrum}
\end{figure*}
A typical tandem mass
spectrometry experiment begins by cleaving proteins into peptides
using a digesting enzyme.  The resulting peptides are then separated
via liquid chromatography and subjected to two
rounds of mass spectrometry.  The first round measures the mass and
charge of the intact peptide, called the \emph{precursor mass} and
\emph{precursor charge}, respectively.  Peptides are then fragmented
and the fragments undergo a second round of mass spectrometry,
the output of which is an observed spectrum indicative of the
fragmented peptide.  The x-axis of this observed spectrum denotes
\emph{mass-to-charge} (\emph{m/z}), measured in thomsons (\thomson),
and the y-axis is a unitless intensity measure, roughly proportional
to the abundance of a single fragment ion with a given m/z value.  A
sample such observed spectrum is illustrated in
Figure~\ref{fig:exampleSpectrum}.

\subsection{MS/MS Database Search}\label{section:databaseSearch}
Let $s$ be an observed spectrum with
precursor mass $m(s)$ and precursor charge $c(s)$.  In order to
identify $s$, we search a database of peptides, as follows.  
Let $\cP$ be the set
of all possible peptide sequences.  Each peptide
$x \in \cP$ is a string $x = x_1x_2 \dots x_{\pepLength}$ comprised of
characters, called \emph{amino acids}.  Given a
peptide database $\pepDb \subseteq \cP$, we wish to find the
peptide $\pep \in \pepDb$ responsible for generating $\obsSpec$.
Using the precursor mass and charge, the set of peptides to be scored
is constrained by setting a mass tolerance threshold, $\mzTol$, such
that we score the set of \emph{candidate peptides}
$\candidatePeps(s, \pepDb, \mzTol)= \left\{\pep: \pep
  \in \pepDb ,\, \left| \frac{m(\pep)}{c(s)}-m(s) \right| \leq
  \mzTol\right\}$, where $m(\pep)$ denotes the mass of peptide $x$.
Note that we've overloaded $m(\cdot)$ to return either a peptide's or
observed spectrum's precursor mass; we similarly overload $c(\cdot)$.
Given $s$ and denoting an arbitrary scoring function as
$\psi(\pep,s)$, the output of a search algorithm is thus $\pep^* =
\argmax_{\pep \in  \candidatePeps(m(s), c(s),\pepDb,
  \mzTol)} \psi(\pep,s)$, the top-scoring PSM.

\subsubsection{Theoretical Spectra}\label{section:theoreticalSpectra}
In order to score a candidate peptide $x$, fragment ions corresponding
to suffix masses (called \emph{b-ions}) and prefix masses (called
\emph{y-ions}) are collected into a \emph{theoretical spectrum}.  The
annotated b- and y-ions of the generating peptide for an observed
spectrum are illustrated in Figure~\ref{fig:exampleSpectrum}.  
Varying based on the
value of $c(s)$, the $k$th respective b- and y-ion
pair of $x$ are
\begin{equation*}
b(x,c_b, k) = \frac{\sum_{i = 1}^k m(x_i) + c_b}{c_b}, 
\;\;\;\; y(x,c_y, k) =
\frac{\sum_{i = n-k}^n m(x_i) + 18+ c_y}{c_y},
\end{equation*}
where $c_b$ is the charge of the b-ion and $c_y$ is the charge of the
y-ion.  For $c(s) = 1$, we have $c_b = c_y = 1$, since these are the only
possible, detectable fragment ions.  For higher observed charge states
$ c(s) \geq 2$, it is unlikely for a single fragment ion to consume
the entire charge, so that we have $c_b + c_y = c(s)$,  where $c_b,
c_y \in [1, c(s)-1]$. The b-ion offset corresponds to the mass of a $c_b$
charged hydrogen atom, while the y-ion offset corresponds to the mass of a water
molecule plus a $c_y$ charged hydrogen atom.  

Further fragment ions
may occur, each corresponding to the loss of a molecular group off a
b- or y-ion.  Called \emph{neutral losses}, these correspond to a loss
of either water, ammonia, or carbon monoxide.  These fragment ions are
commonly collected into a vector $v$, whose elements are weighted
based on their corresponding fragment ion.  For instance,
XCorr~\cite{eng:approach} assigns all b- and y-ions a weight of 50 and
all neutral losses a weight of 10.

\subsection{Previous Work}

Many scoring functions have been proposed for use in search
algorithms.  They range from simple dot-product scoring functions
(X!Tandem~\cite{craig:tandem}, Morpheus~\cite{wenger2013proteomics}),
to cross-correlation based scoring functions
(XCorr~\cite{eng:approach}), to exact $p$-values over linear scoring
functions calculated using dynamic programming
(MS-GF+~\cite{kim:msgfPlus} and XCorr
$p$-values~\cite{howbert:computing}).  The recently introduced
DRIP~\cite{halloran2014uai-drip} scores candidate peptides without
quantization of m/z measurements and allows learning the
expected locations of theoretical peaks given high quality, labeled
training data.  In order to avoid quantization of the m/z
axis, DRIP employs a dynamic alignment strategy wherein two types of
prevalent phenomena are explicitly modeled:
spurious observed peaks, called \emph{insertions}, and absent
theoretical peaks, called \emph{deletions} (examples of both are
displayed in Figure~\ref{fig:exampleSpectrum}).  DRIP then uses
max-product inference to calculate the most probable sequences of
insertions and deletions to score candidate peptides, and was shown to
achieve state-of-the-art performance on a variety of datasets.


In practice, scoring functions are often \emph{poorly calibrated}
(i.e., PSM scores from different spectra are difficult to compare to
one another), leading to potentially identified spectra left on the
table during statistical analysis.
In order to properly recalibrate such PSM scores, several
semi-supervised post-processing methods have been
proposed~\cite{kall:semi-supervised,
  spivak:improvements,spivak:direct}.  The most popular such method is
Percolator~\cite{kall:semi-supervised}, which, given the output target
and decoy PSMs (discussed in Section~\ref{section:results}) of a
scoring algorithm and features detailing
each PSM, utilizes an SVM to learn a discriminative classifier between
target PSMs and decoy PSMs.  PSM scores are then recalibrated using
the learned decision boundary.

Recent work has attempted to leverage the generative nature of the DRIP
model for discriminative use by Percolator~\cite{halloran2016dynamic}.
As earlier mentioned, the output of DRIP is the most probable sequence
of insertions and deletions, i.e., the Viterbi path.  However,
DRIP's observations are the sequences of observed spectrum
m/z and intensity values, so that the lengths of PSM's Viterbi paths
vary depending on the number of observed spectrum peaks.  In order to
exploit DRIP's output in the feature-space of a discriminative
classifier, PSM Viterbi paths were heuristically mapped to a
fixed-length vector of features.  The resulting heuristic features
were shown to dramatically improve Percolator's ability to
discriminate between PSMs.


\subsection{Fisher Kernels}
Using generative models to extract features for discriminative
classifiers has been used to great effect in many problem
domains by using Fisher
kernels~\cite{jaakkolaFisherKernelNips1998, jaakkola1999using,
  elkan2005deriving}.  Assuming a
generative model with a set of parameters $\theta$ and
likelihood $p(O | \theta) = \sum_{H}p(O, H | \theta)$, where $O$ is
a sequence of observations and $H$ is the set of hidden variables, the
\emph{Fisher score} is then $U_{o} =
\nabla_{\theta} \log p(O | \theta)$.  Given observations
$O_i$ and $O_j$ of differing length (and, thus, different
underlying graphs in the case of dynamic graphical models), a
kernel-based classifier over these instances is trained using
$U_{O_i}$ and $U_{O_j}$ in the feature-space.  Thus, a similarity
measure is learned in the gradient space, under the intuition that
objects which induce similar likelihoods will induce similar
gradients.
\section{DRIP Fisher Scores}\label{section:dripFisherScores}
\begin{figure}[htbp!]
\begin{center}
\includegraphics[page=8,trim=1.75in 1.2in 0.6in 0.0in,clip=true,
width=0.5\linewidth]{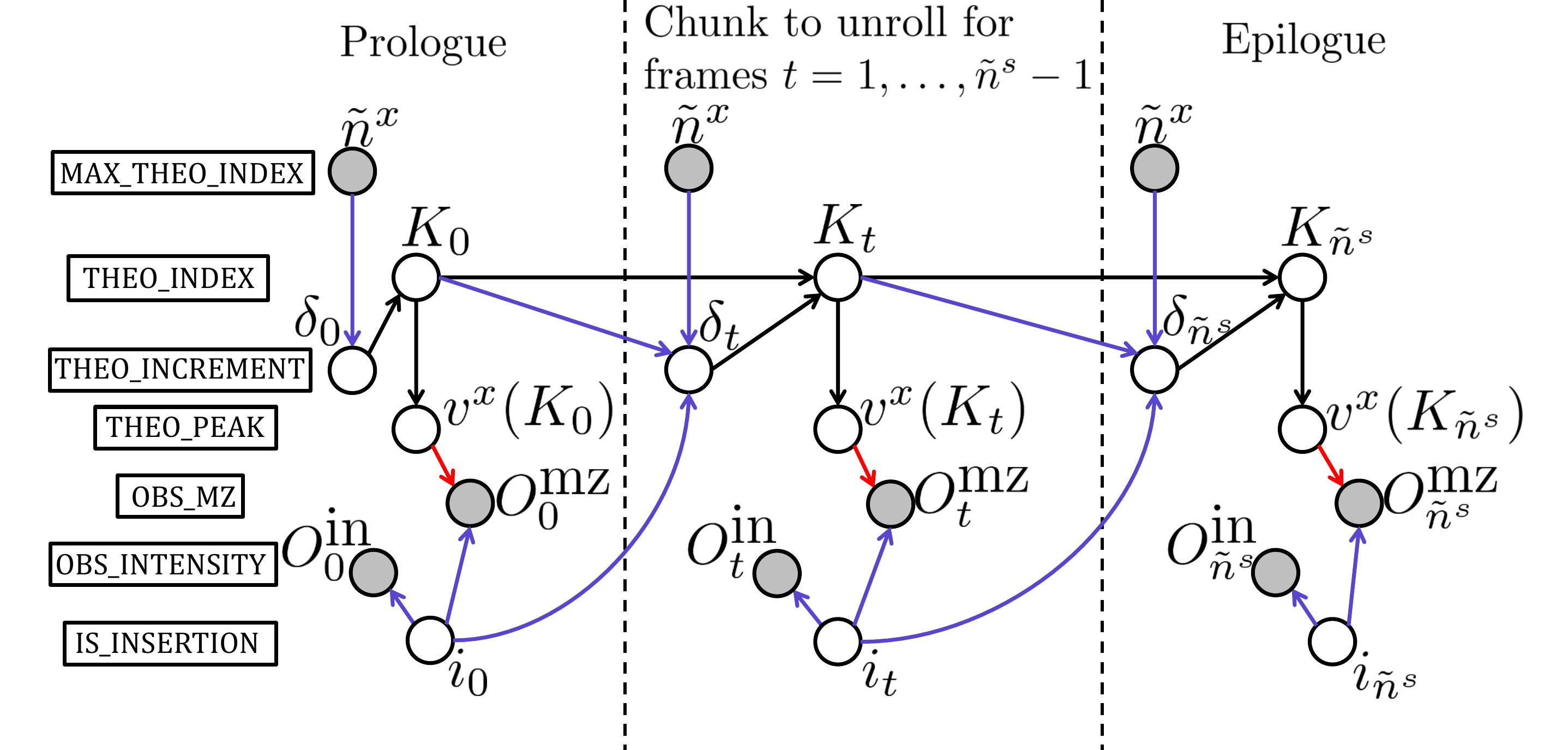}
\end{center}
\caption{{\small Graph of DRIP, the frames (i.e., time instances) of which correspond to
  observed spectrum peaks.  Shaded nodes represent observed variables
  and unshaded nodes represent hidden variables.  Given an observed
  spectrum, the middle frame (the chunk) dynamically expands to
  represent the second observed peak to the penultimate observed peak.}
}
\label{fig:dripGraph}
\end{figure}
We first define, in detail, DRIP's log-likelihood, followed by the
Fisher score derivation for DRIP's learned parameters.  For discussion
of the DRIP model outside the scope of this work, readers are directed
to~\cite{halloran2014uai-drip, halloran2016dynamic}.
Denoting an observed peak as a pair $(\Obsmz, \Obsi)$ consisting of an
m/z measurement and intensity measurement, respectively, let $s =
(\Obsmz_1, \Obsi_1), (\Obsmz_2, \Obsi_2), \dots, (\Obsmz_{\dripT},
\Obsi_{\dripT})$ be an MS/MS spectrum of $\dripT$ peaks and $x$ be a
candidate (which, given $s$, we'd like to score).  We denote the
theoretical spectrum of
$\pep$, consisting of its unique b- and y-ions sorted in ascending
order, as the length-$\dtv$ vector $\theo$.  The graph of DRIP is
displayed in Figure~\ref{fig:dripGraph}, where variables which control
the traversal of the theoretical spectrum are highlighted in blue and
variables which control the scoring of observed peak measurements are
highlighted in red.  Groups of variables are collected into time
instances called \emph{frames}.  The frames of DRIP correspond to the
observed peak m/z and intensity observations, so that there are $\dripT$
frames in the model.

Unless otherwise specified, let $t$ be an arbitrary frame $1 \leq t
\leq \dripT$.  $\delta_t$ is a multinomial random variable which dictates
the number of theoretical peaks traversed in a frame.  The random
variable $K_t$, which denotes the index of the current theoretical
peak index, is a deterministic function of its parents, such that
$p(K_t = K_{t-1} + \delta_{t} | K_{t-1}, \delta_t) = 1$.  Thus,
$\delta_t>1$ corresponds to the deletion of $\delta_t - 1$ theoretical
peaks.  The parents of $\delta_t$ ensure that DRIP does not attempt to
increment past the last theoretical peak, i.e., $p(\delta_t > \dtv -
K_{t-1}| \dtv, K_{t-1}, i_{t-1}) = 0$.  Subsequently, the theoretical peak value
$v(K_t)$ is used to access a Gaussian from a collection (the mean of
each Gaussian corresponds to a position along the m/z axis, learned
using the EM algorithm~\cite{dempster:maximum}) with which
to score observations.  Hence, the state-space of the model is all
possible traversals, from left to right, of the theoretical spectrum,
accounting for all possible deletions.

When scoring observed peak measurements, the Bernoulli random variable
$i_t$ denotes whether a peak is scored
using learned Gaussians (when $i_t = 0$) or considered an insertion
and scored using an insertion
penalty (when $i_t = 1$).  When scoring m/z observations, we thus have $p(\Obsmz_t | v(K_t),
i_t = 0) = f(\Obsmz_t | \mumz(v(K_t)), \mzVar)$ and $p(\Obsmz_t |
v(K_t), i_t = 1) = \insPen$, where $\mumz$ is a vector of Gaussian
means and $\mzVar$ the m/z Gaussian variance.  Similarly, when scoring intensity
observations, we have $p(\Obsi_t | i_t = 0) = f(\Obsi_t | \meanInt,
\insVar)$ and $p(\Obsi_t | i_t = 1) = \insPenInt$, where
$\meanInt$ and $\insVar$ are the intensity Gaussian mean and variance,
respectively.  Let $i_0 =  K_0 = \emptyset$ and $\indicator_{ \{ \cdot
\}}$ denote the indicator function.
Denoting DRIP's Gaussian parameters as $\theta$, the likelihood is
thus
{\small
\begin{align*}
p(s | x,\theta) &= \prod_{t = 1}^{\dripT} p(\delta_{t} | K_{t-1}, \dtv, i_{t-1})
\indicator_{\{ K_t = K_{t-1} + \delta_t \}}p(\Obsmz_t | K_t)p(\Obsi_t)\\
&= \prod_{t = 1}^{\dripT} p(\delta_{t} | K_{t-1}, \dtv, i_{t-1})
\indicator_{\{ K_t = K_{t-1} + \delta_t \}}(\sum_{i_t =
  0}^{1}p(i_t)p(\Obsmz_t | K_t, i_t))(\sum_{i_t =
  0}^{1}p(i_t)p(\Obsi_t | i_t))\\
&= \prod_{t = 1}^{\dripT} \phi (\delta_t, K_{t-1},
i_t, i_{t-1}).
\end{align*}}

The only stochastic variables in the model are $i_t$ and $\delta_t$, where
all other random variables are either observed or 
deterministic given the sequences $i_{1:\dripT}$ and $\delta_{1:\dripT}$.  Thus,
we may equivalently write $p(s | x,\theta) =
p(i_{1:\dripT}, \delta_{1:\dripT} | \theta)$.  The Fisher score of the
$k$th m/z mean is thus $\frac{\partial}{\partial \mumz(k)} \log p(s |
x,\theta) = \frac{1}{p(s | x,\theta)} \frac{\partial}{\partial
  \mumz(k)} p(s |x, \theta)$, and we have (please see
Appendix~\ref{appendix:dripFisherKernelDerivation} for the full derivation)
{\small
\begin{align}
\frac{\partial}{\partial
  \mumz(k)} p(s |x, \theta)=& 
\frac{\partial}{\partial \mumz(k)}
\sum_{i_{1:\dripT}, \delta_{1:\dripT}} p(i_{1:\dripT}, \delta_{1:\dripT} | \theta)
=
\sum_{i_{1:\dripT}, \delta_{1:\dripT} : K_t = k, 1 \leq
  t \leq \dripT} \frac{\partial}{\partial \mumz(k)}p(i_{1:\dripT}, \delta_{1:\dripT}
| \theta) \nonumber \\
=&
\sum_{i_{1:\dripT}, \delta_{1:\dripT}} \indicator_{\{ K_t = k \}}
p(s| x,\theta)
\left ( \prod_{t: K_t = k}\frac{1}{p(\Obsmz_t | K_t)} \right )
\left (\frac{\partial}{\partial \mumz(k)}  
\prod_{t: K_t = k}p(\Obsmz_t | K_t)
\right ). \nonumber
\end{align}
}
{\small
\begin{align}
\Rightarrow
\frac{\partial}{\partial \mumz(k)} \log p(s | x, \theta) 
=& \sum_{t = 1}^\dripT
p(i_{t}, K_{t} = k | s, \theta)
  p(i_t = 0 | K_t, \Obsmz_t)\frac{(\Obsmz_t -
    \mumz(k))}{\mzVar}\label{eqn:fisherScoreMzMean}.
\end{align}}
Note that the posterior in
Equation~\ref{eqn:fisherScoreMzMean}, and thus the Fisher score, may
be efficiently computed using sum-product inference.  Through similar
steps, we have
{\small
\begin{align}
\frac{\partial}{\partial \mzVar(k)} \log p(s | x,\theta) =&
 \sum_{t} 
p(i_{t}, K_{t} = k | s, \theta)
  p(i_t = 0 | K_t, \Obsmz_t)(\frac{(\Obsmz_t -
    \mumz(k))}{2 \sigma^4} - \frac{1}{2\mzVar}) 
\label{eqn:fisherMzVariance}\\
\frac{\partial}{\partial \meanInt} \log p(s | x,\theta) =&
\sum_{t} 
p(i_{t}, K_{t} | s, \theta)
  p(i_t = 0 | \Obsi_t)\frac{ (\Obsi_t -
    \meanInt)}{\insVar}
\label{eqn:fisherIntensityMean}\\
\frac{\partial}{\partial \insVar} \log p(s | x,\theta) =&
\sum_{t} 
p(i_{t}, K_{t} | s, \theta)
  p(i_t = 0 | \Obsi_t)
(\frac{(\Obsi_t -
    \meanInt)}{2 \bar{\sigma}^4} - \frac{1}{2\insVar}),
\label{eqn:fisherIntensityCovariance}
\end{align}
}
where $\mzVar(k)$ denotes the partial derivative of the variance for
the $k$th m/z Gaussian with mean $\mumz(k)$.

Let $U_{\mu} =
\nabla_{\mumz} \log p(s,x | \theta)$ and $U_{\sigma^2} =
\nabla_{\sigma^2(k)} \log p(s,x | \theta)$.  $U_{\mu}$ and $U_{\sigma^2}$ are length-$d$
vectors corresponding to the mapping of a peptide's sequence of b- and
y-ions into $r$-dimensional space (i.e., dimension equal to an
m/z-discretized observed spectrum).  Let $\mathbbm{1}$ be the
length-$r$ vector of ones.  Defining  $z^{\mbox{mz}}, z^{\mbox{i}} \in
\mathbb{R}^r$, the elements of which are the
quantized observed spectrum m/z and intensity values, respectively, we
use the following DRIP gradient-based features for SVM training in
Section~\ref{section:results}:
$|U_{\mu}|_1$, $|U_{\sigma^2}|_1$, $U_{\mu}^Tz^{\mbox{mz}}$,
$U_{\sigma^2}^Tz^{\mbox{i}}$, $U_{\mu}^T\mathbbm{1}$,
$U_{\sigma^2}^T\mathbbm{1}$, $\frac{\partial}{\partial \meanInt} \log
p(s,x | \theta)$, and $\frac{\partial}{\partial \insVar} \log p(s,x |
\theta)$.

\section{Theseus}
\label{section:theseus}
Given an observed spectrum $s$, we focus on representing the universe
of linear PSM scores using a DBN.  Let $z$ denote the
vector resulting from preprocessing the observed
spectrum, $s$.  As a modeling example, we look to represent the
popular XCorr scoring function.  Using subscript $\tau$ to denote a
vector whose elements are shifted $\tau$ units, XCorr's scoring
function is defined as
{\small
\begin{align*}
\mbox{XCorr}(s,x) &= v^T z - \sum_{\tau = -75}^{75} v^T z_{\tau}= v^T
(z - \sum_{\tau = -75}^{75} z_{\tau}) = v^T z',
\end{align*}}
where $z' = z - \sum_{\tau = -75}^{75} z_{\tau}$.  Let $\theta \in
\mathbb{R}^l$ be a vector of XCorr weights for the various types of
possible fragment ions (described in
Section~\ref{section:theoreticalSpectra}).  As described in
\cite{howbert:computing}, given $c(s)$, we
reparameterize $z'$ into a vector $z_{\theta}$ such that
$\mbox{XCorr}(x,s)$ is rendered as a dot-product between $z_{\theta}$
and a boolean vector $u$ in the reparameterized space.  This
reparameterization readily applies to any linear MS/MS scoring
function.  The $i$th element of $z_{\theta}$ is $z_{\theta}(i) = \sum_{j =
  1}^{l} \theta(j) z_j(i)$, where $z_j$ is a vector whose
element $z_j(i)$ is the sum of all higher charged fragment ions
added into the singly-charged fragment ions for the $j$th fragment ion
type.  The nonzero elements of $u$ correspond to the singly-charged
b-ions of $x$ 
and we have $u^Tz_{\theta} = \sum_{i =
  1}^nz_{\theta}(m(x_i) + 1) = \sum_{i = 1}^n \sum_{j =
  1}^l\theta(j) z_j(m(x_i) + 1) = v^Tz' = \mbox{XCorr}(s,x)$.

\begin{figure}[htbp!]
\begin{center}
\includegraphics[page=19,trim=2.3in 1.95in 2.55in 0.9in,clip=true,
width=0.5\textwidth]{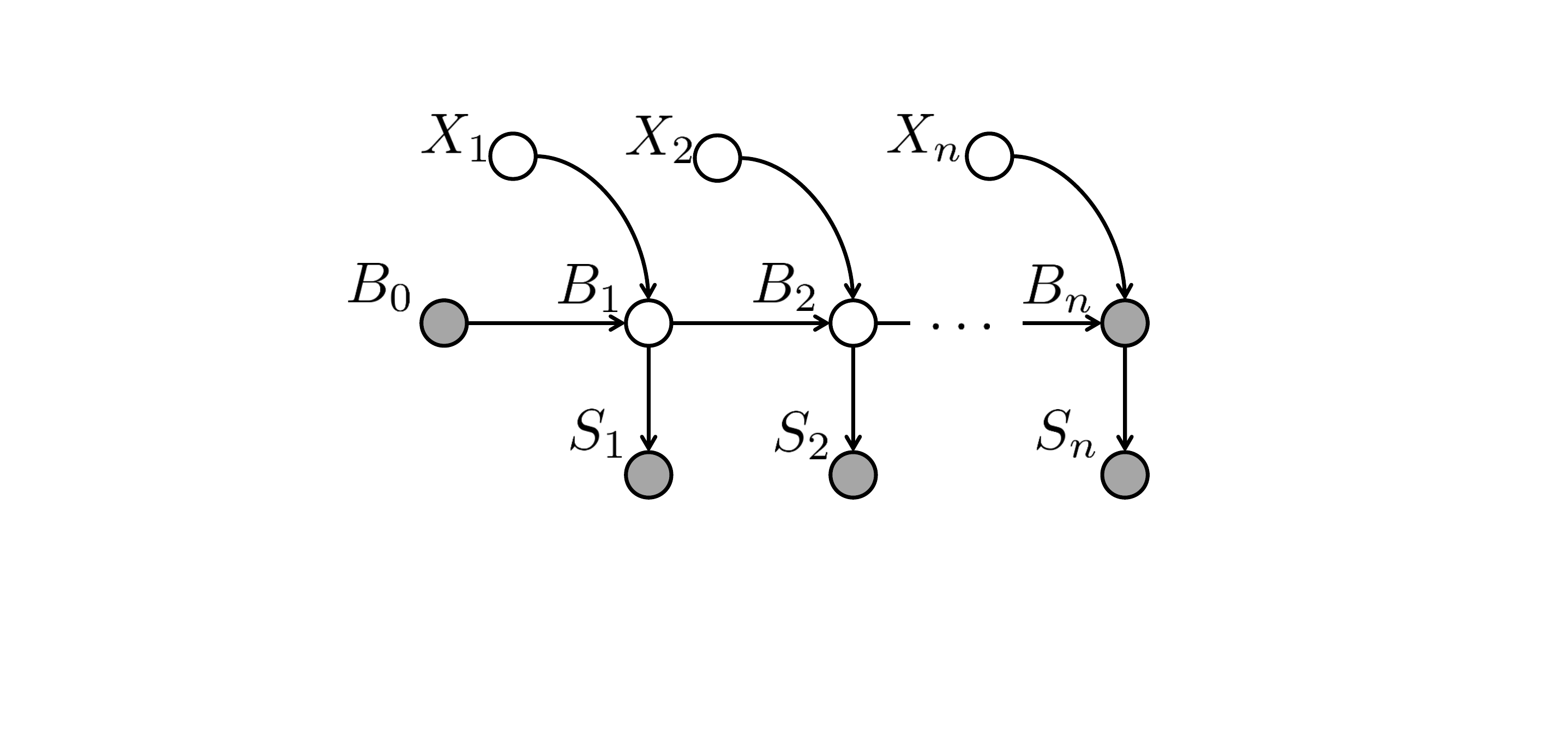}
\end{center}
\caption{{\small
Graph of Theseus.  Shaded nodes are observed
random variables and unshaded nodes are hidden (i.e., stochastic).
The model is unrolled for $n+1$ frames, including $B_0$ in frame zero.
Plate notation denotes $M$ repetitions of the model, where $M$ is the
number of discrete precursor masses allowed by the precursor-mass
tolerance threshold, $\mzTol$.}}
\label{fig:generativeModel}
\end{figure}

Our generative model is illustrated in
Figure~\ref{fig:generativeModel}.
$n$ is the maximum possible peptide length and $m$ is one of $M$
discrete precursor masses (dictated by the precursor-mass tolerance
threshold, $\mzTol$, and $m(s)$).  A
\emph{hypothesis} is an instantiation of random variables across all
frames in the model, i.e., for the set of all possible
sequences of $X_i$ random variables, $X_{1:n} = X_1, X_2, \dots, X_n$,
a hypothesis is $x_{1:n} \in X_{1:n}$.  In our case, each hypothesis
corresponds to a peptide and the corresponding log-likelihood its
XCorr score.  Each frame after the first contains an amino acid random
variable so that we accumulate b-ions in successive frames and access
the score contribution for each such ion.

For frame $i$, $X_i$ is a random amino acid and $B_i$ the
accumulated mass up to the current frame.  $B_0$ and $B_n$ are
observed to zero and $m$, respectively, enforcing the boundary
conditions that all length-$n$ PSMs considered begin with mass zero
and end at a particular precursor mass.  For $i > 0$, $B_i$ is a
deterministic function of its parents, $p(B_i | B_{i-1}, X_{i}) =
p(B_i = B_{i-1} +m(X_i)) = 1$.  Thus, hypotheses which do not respect
these mass constraints receive probability zero, i.e., $p(B_n \neq m |
B_{n-1}, X_n) = 0$.  $m$ is observed to the value of
the current precursor mass being considered.  

Let $\cA$ be the set of amino acids, where $|\cA| = 20$.  Given
$B_i$ and $m$, the conditional distribution of $X_i$ changes such that
$p(X_i \in \mathcal{A} | B_{i-1} < m) = \alpha \mathcal{U}\{
\mathcal{A} \}, p(X_i = \kappa | B_{i-1} \geq m) = 1$, where
$\mathcal{U}\{ \cdot \}$ is the uniform distribution over the input
set and $\kappa \notin \mathcal{A}$, $m(\kappa) = 0$.  Thus, when the
accumulated mass is less than $m$, $X_i$ is a random amino acid and,
otherwise, $X_i$ deterministically takes on a value with zero mass.
To recreate XCorr scores, $\alpha = 1 / |\mathcal{A}|$, though, in
general, any desired mass function may be used for $p(X_i
\in \mathcal{A} | B_{i-1} < m)$.


$S_i$ is a \emph{virtual evidence
  child}~\cite{pearl:probabilistic}, i.e., a leaf node
whose conditional distribution need not be normalized to compute
probabilistic quantities of interest in the DBN.  For our model, we
have  $p(S_i | B_i < m, \theta) = \exp(z_{\theta}(B_i)) = \exp(\sum_{i =
  1}^{|\theta|}\theta_i z_{i}(B_i))$ and $p(S_i | B_i \geq m, \theta)
= 1$.
Let $t'$ denote the first frame in which $m(X_{1:n}) \geq m$.  The
log-likelihood is then $\log p(s, X_{1:n} | \theta)$
{\small
\begin{align*}
&= \log p(X_{1:n},
B_{0:n}, S_{1:{n-1}})\\
& = \log( \indicator_{ \{B_0 = 0
\}} ( \prod_{i=1}^{n-1} p(X_i | m, B_{i-1}) p(B_i = B_{i-1} +
m(X_i)) p(S_i|  m, B_i, \theta)  ) \indicator_{ \{B_{n-1} + m(X_n)
= m\}})\\
& = \log \indicator_{ \{B_0 = 0 \; \wedge m(X_{1:n}) = m \}} +
\log (\prod_{i=t'+1}^{n} p(X_i | m, B_{i-1}) p(B_i = B_{i-1}
+ m(X_i)) p(S_i |  m, B_i, \theta)) + \\
& \;\;\;\; \log ( 
\prod_{i=1}^{t'} p(X_i | m, B_{i-1}) p(B_i = B_{i-1}
+ m(X_i)) p(S_i|  m, B_i, \theta)  )\\
& = \log \indicator_{ \{ m(X_{1:n})
= m \}} +
\log 1 + \log ( 
\prod_{i=1}^{t'} \exp(z_{\theta}(B_i))  )\\
& = \log \indicator_{ \{m(X_{1:n})
= m \}} + \sum_{i = 1}^{t'}z_{\theta}(B_i) = \log \indicator_{ \{B_0 = 0
\; \wedge \; m(X_{1:n}) = m \}} + \text{XCorr}(X_{1:t'}, s)
\end{align*}
}
The $i$th element of Theseus' Fisher score is thus
{\small
\begin{align}
\frac{\partial}{\partial \theta(i)} \log p(s | \theta) &=
\frac{\partial}{\partial \theta(i)} \log
\sum_{x_{1:n}}p(s, x_{1:n} | \theta)= \frac{1}{p(s
  | \theta)} \frac{\partial}{\partial \theta(i)}
\sum_{x_{1:n}}p(s, x_{1:n} | \theta) \nonumber\\
&= \frac{1}{p(s | \theta)} 
\sum_{x_{1:n}} \indicator_{ \{b_0 = 0 \; \wedge \;
  m(x_{1:n}) = m \} } (\sum_{j=1}^{t'}z_i(b_j))\prod_{j=1}^{t'}
\exp(z_{\theta}(b_j)) \label{equation:theseusFisher}
,
\end{align}
}
While Equation~\ref{equation:theseusFisher} is generally difficult to
compute, we calculate it efficiently using sum-product inference.
Note that when the peptide sequence is observed,
i.e., $X_{1:n} = \hat{x}$, we have $\frac{\partial}{\partial
  \theta(i)} \log p(s, \hat{x} | \theta) =
\sum_{j}z(m(\hat{x}_{1:j}))$.  

Using the model's Fisher scores,
Theseus' parameters $\theta$ may be learned via maximum likelihood
estimation.  Given a dataset of spectra $s^1, s^2, \dots, s^n$, we
present an alternate learning algorithm
(Algorithm~\ref{algorithm:coordinateAscent}) which
converges monotonically to a local optimum (proven in
Appendix~\ref{appendix:monotonicProof}).  Within each iteration,
Algorithm~\ref{algorithm:coordinateAscent} uses max-product inference
to efficiently infer the most probable PSMs per iteration,
mitigating the need for training labels.  $\theta$ is maximized in
each iteration using gradient ascent.

{\small
\begin{algorithm}
\caption{Theseus Unsupervised Learning Algorithm}\label{algorithm:coordinateAscent}
\begin{algorithmic}[1]
\While{not converged}
\For{$i = 1, \dots, n$}
\State $\hat{x}^i \leftarrow \argmax_{x^i \in \cP} \log p(s^i, x^i | \theta)$
\EndFor
\State $\theta \leftarrow \argmax_{\theta} \sum_{i=1}^n \log p(s^i, \hat{x}^i |
\theta)$
\EndWhile
\end{algorithmic}
\end{algorithm}
}

\section{Results}
\label{section:results}
Measuring peptide identification performance is complicated by the
fact that ground-truth is unavailable for real-world data.  Thus, in
practice, 
it is most common to estimate the \emph{false discovery rate}
(\emph{FDR})~\cite{benjamini:controlling} by searching a decoy
database of peptides which
are unlikely to occur in nature, typically generated by
shuffling entries in the
target database~\cite{keich2015improved}.  For a particular
score threshold, $t$, FDR is then calculated as the proportion of
decoys scoring better than $t$ to the number of targets scoring better
than $t$.  Once the target and decoy PSMs are calculated, a
curve displaying the FDR threshold vs. the number of correctly
identified targets at each given threshold may be calculated.  In
place of FDR along the x-axis, we use the
\emph{q-value}~\cite{keich2015improved},  defined to be the minimum
FDR threshold at which a
given score is deemed to be significant.  As many applications
require a search algorithm perform well at low thresholds, we only
plot $q \in [0, 0.1]$.  

The same datasets and search settings used to evaluate DRIP's
heuristically derived features in~\cite{halloran2016dynamic} are
adapted in this work.  MS-GF+ (one of the most accurate search
algorithms in wide use, plotted for reference) was run using version
9980, with PSMs ranked by E-value and Percolator features calculated
using \texttt{msgf2pin}.  All database
searches were run using a $\pm 3.0 \thomson$ mass tolerance, XCorr
flanking peaks not allowed in Crux searches, and all search
algorithm settings otherwise left to their defaults.  Peptides were
derived from the protein databases using trypsin cleavage rules
without suppression of proline and a single fixed carbamidomethyl
modification was included.

Gradient-based feature representations derived from DRIP and
XCorr were used to train an SVM classifier~\cite{kall:semi-supervised} and recalibrate PSM
scores.  Theseus training and computation of XCorr Fisher scores were
performed using a customized version of Crux
v2.1.17060~\cite{mcilwain:crux}.  For an XCorr PSM, a feature
representation is derived directly using both $\nabla_{\theta} \log
p(s | \theta)$ and $\nabla_{\theta} \log p(s,x | \theta)$ as defined
in Section~\ref{section:theseus}, representing gradient information
for both the distribution of PSM scores and the
individual PSM score, respectively.  
DRIP gradient-based
features, as defined in Section~\ref{section:dripFisherScores}, were
derived using a customized version of the DRIP
Toolkit~\cite{halloran2016dynamic}.
Figure~\ref{fig:absRanking}
displays the resulting search accuracy for
four worm and yeast datasets
.  For the uncalibrated search
results in Figure~\ref{fig:theseusUnsupervisedLearning}, we show that
XCorr parameters may be learned without supervision using Theseus, and
that the presented coordinate descent algorithm (which estimates the
most probable PSMs to take a step in the objective space) converges to
a much better local optimum than maximum likelihood estimation.
\begin{figure}[htbp!]
  \centering
  \subfigure{\raisebox{9.0mm}{\includegraphics[trim=3.4in 1.0in 0.3in 1.7in,
    clip=true,scale=0.485]{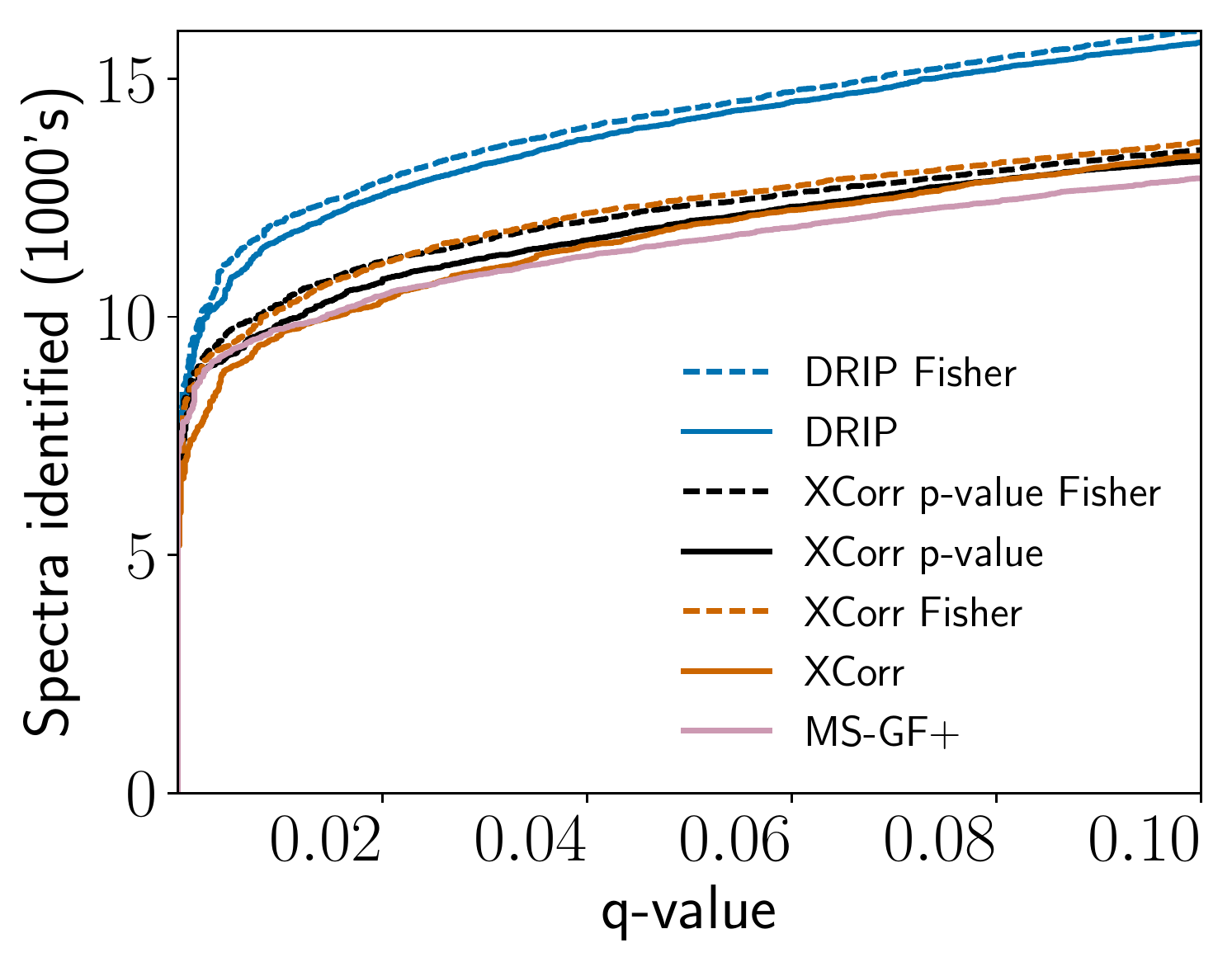}}}
  \subfigure[Worm-1]{\includegraphics[trim=0.0in 0.0in 0.0in 0.05in,
    clip=true,scale=0.3]{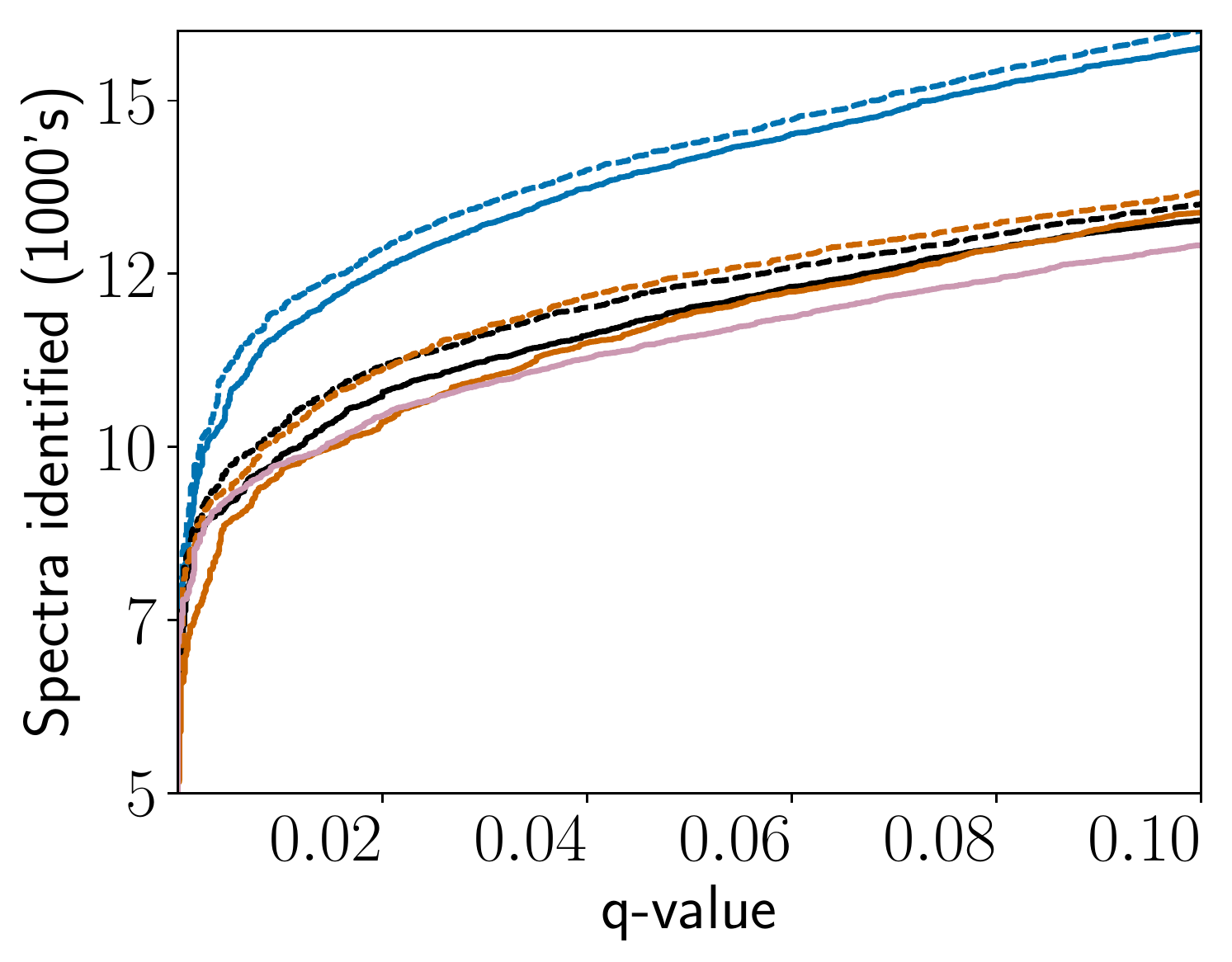}}
  \subfigure[Worm-2]{\includegraphics[trim=0.45in 0.0in 0.0in 0.05in,
    clip=true,scale=0.3]{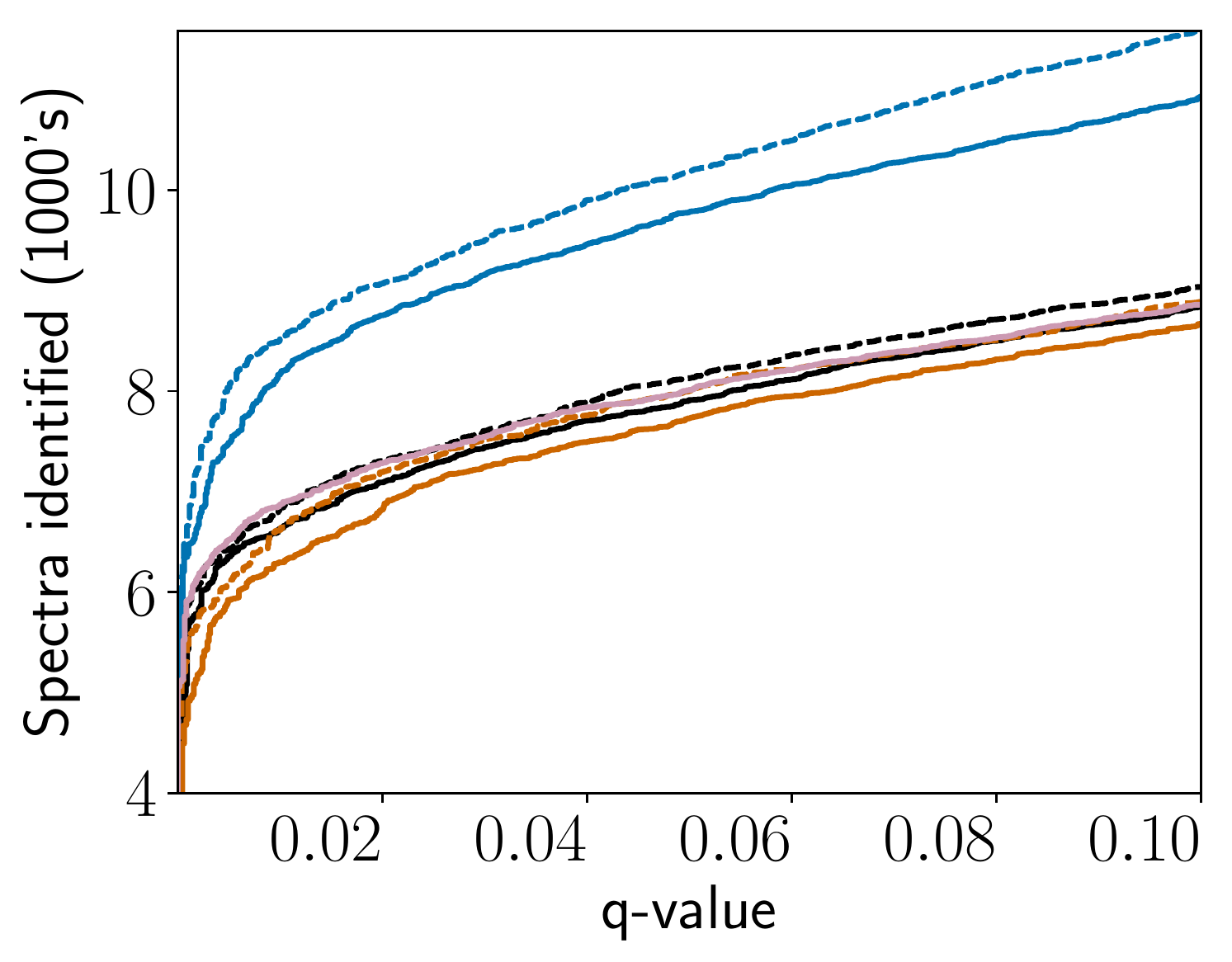}}
  \subfigure[Worm-3]{\includegraphics[trim=0.45in 0.0in 0.0in 0.05in,
    clip=true,scale=0.3]{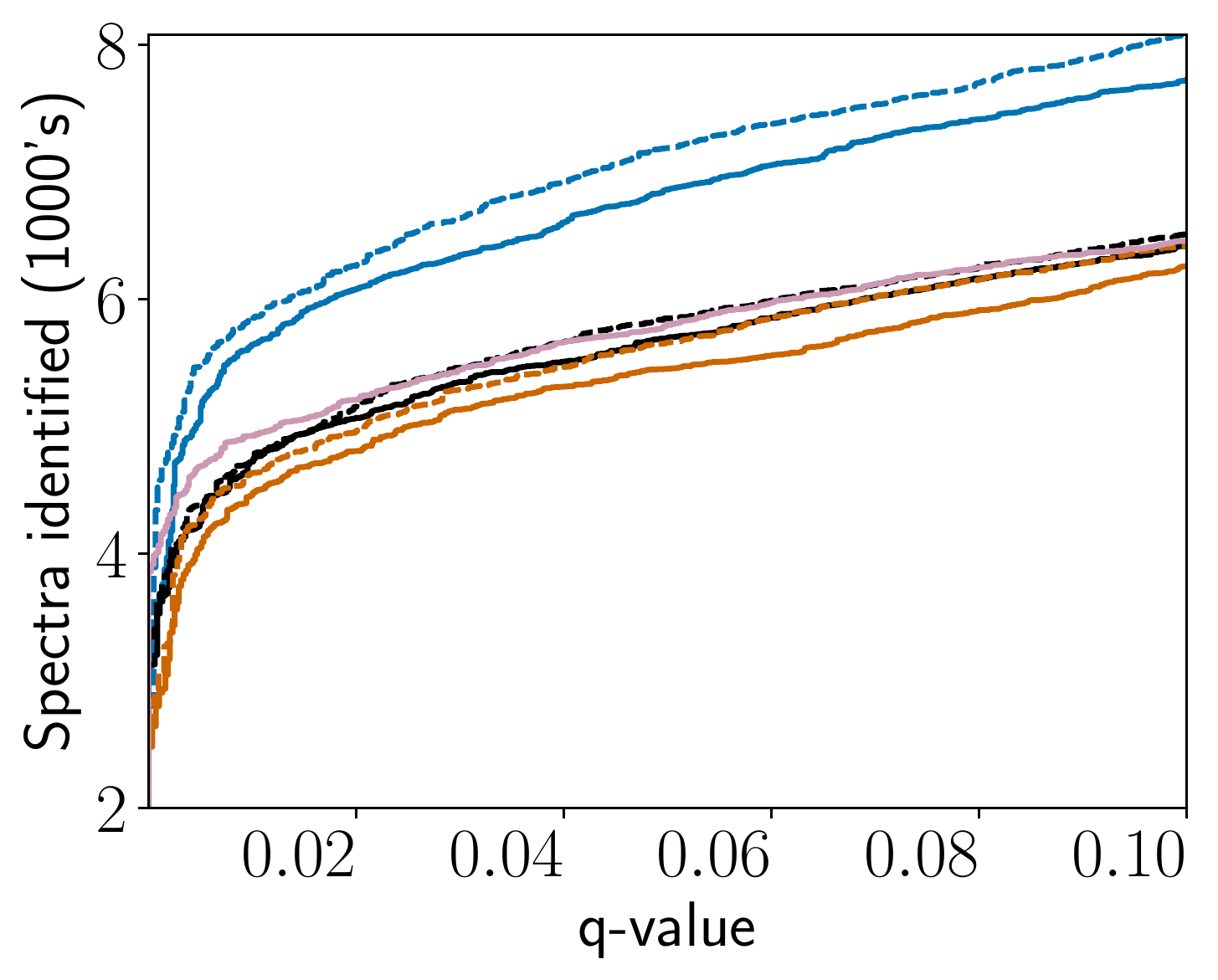}}
  \subfigure[Worm-4]{\includegraphics[trim=0.0in 0.0in 0.0in 0.05in,
    clip=true,scale=0.3]{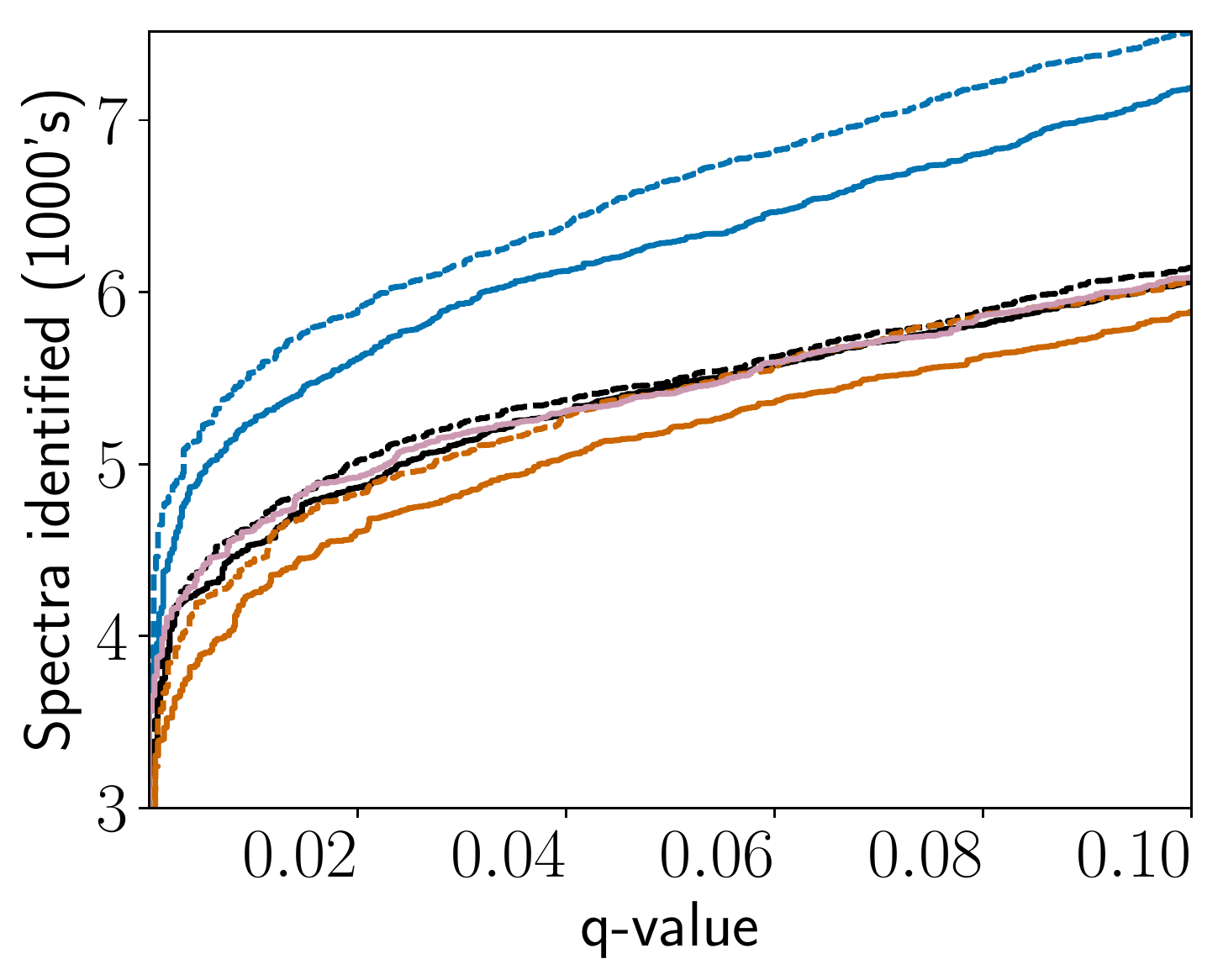}}
  \subfigure[Yeast-1]{\includegraphics[trim=0.45in 0.0in 0.0in 0.05in,
    clip=true,scale=0.3]{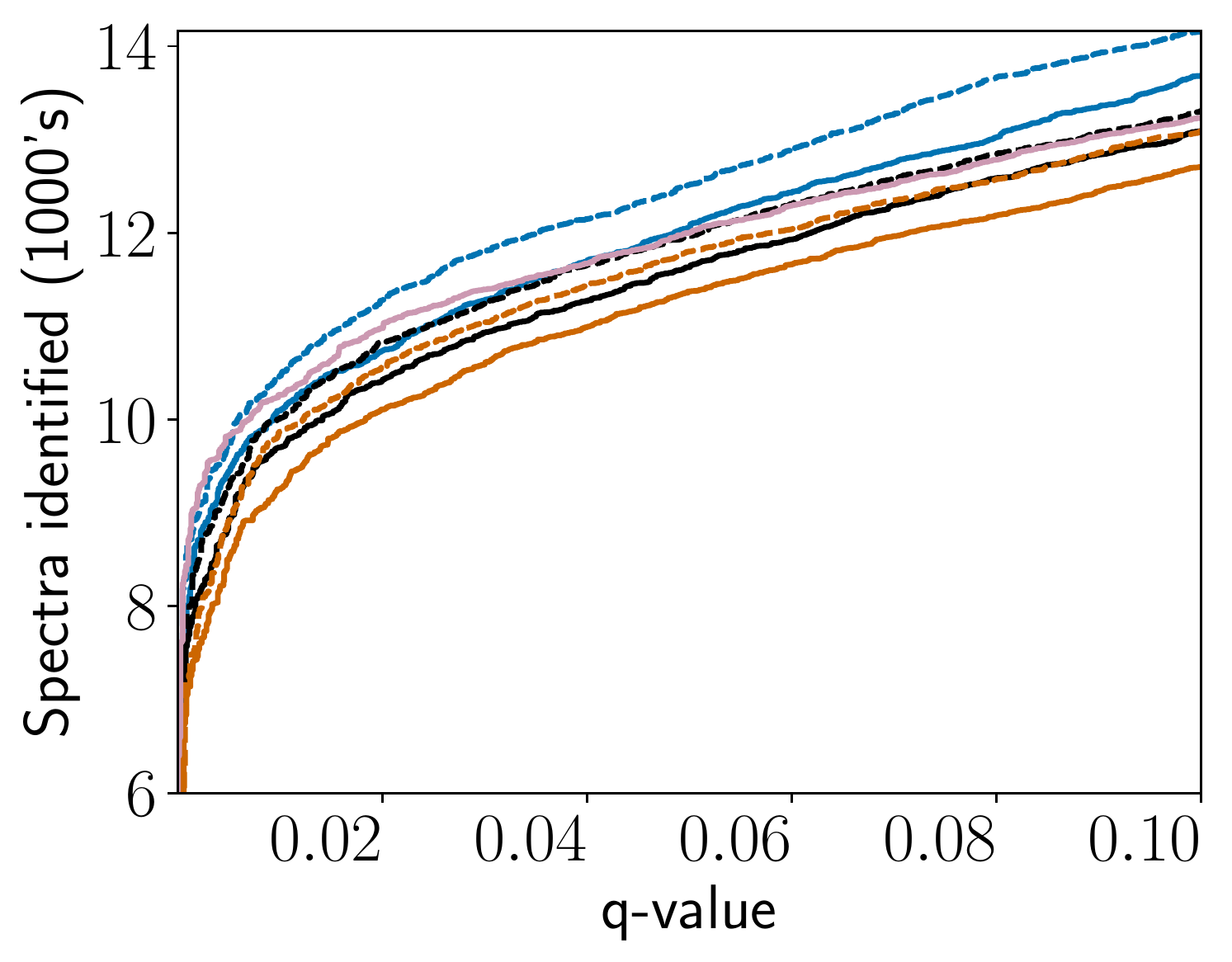}}
  \subfigure[Yeast-2]{\includegraphics[trim=0.45in 0.0in 0.0in 0.05in,
    clip=true,scale=0.3]{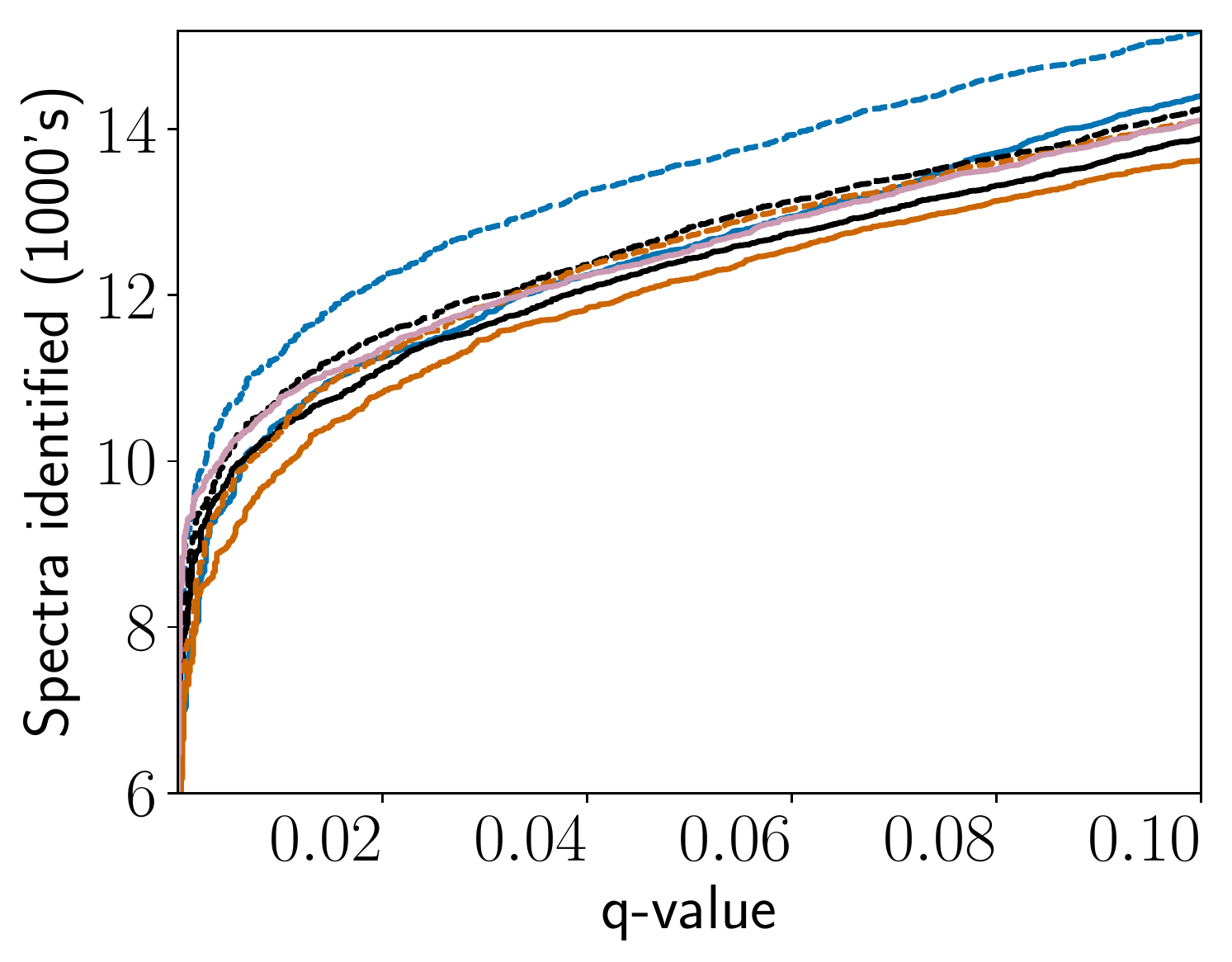}}
  \subfigure[Yeast-3]{\includegraphics[trim=0.45in 0.0in 0.0in 0.05in,
    clip=true,scale=0.3]{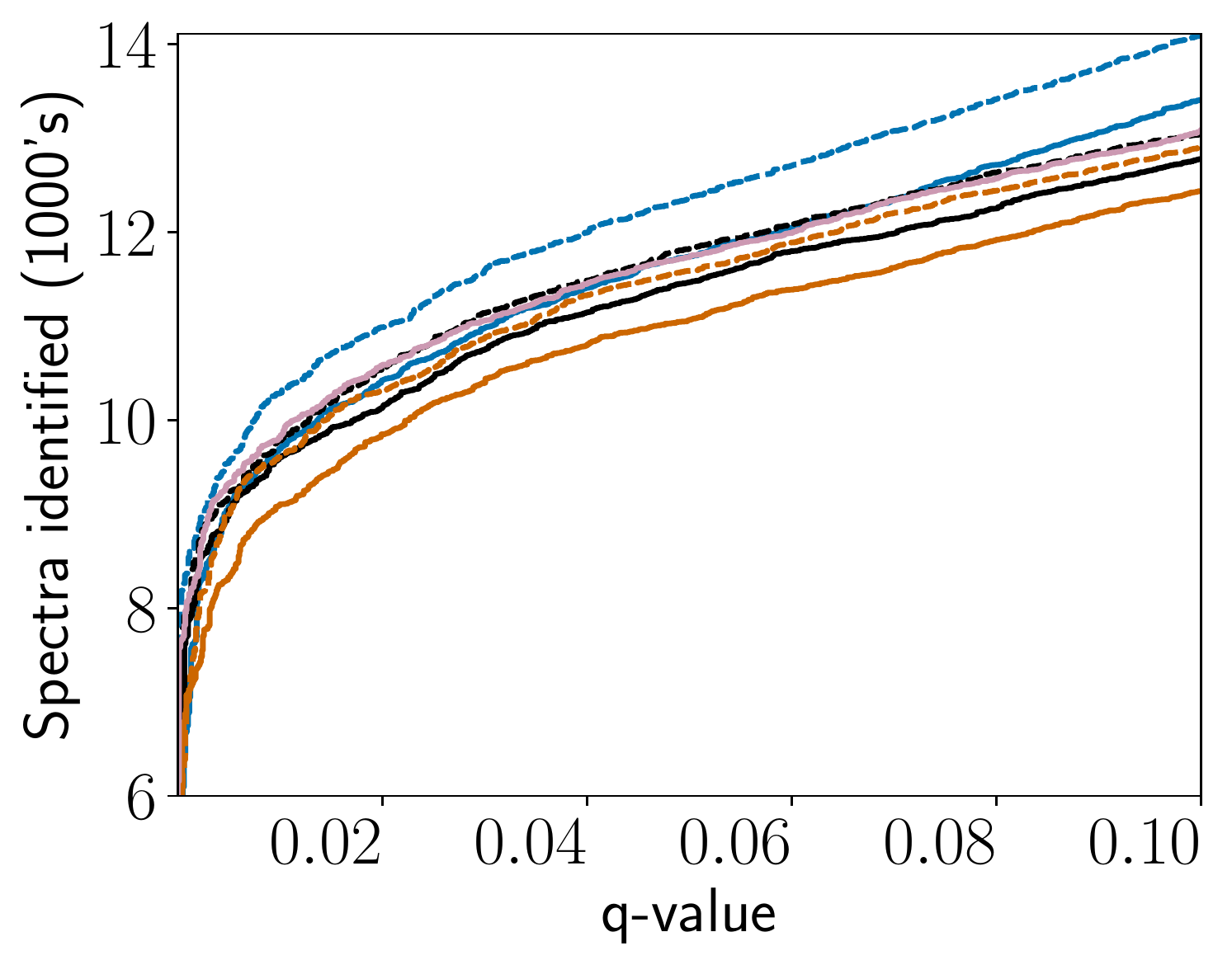}}
  \subfigure[Yeast-4]{\includegraphics[trim=0.0in 0.0in 0.0in 0.05in,
    clip=true,scale=0.3]{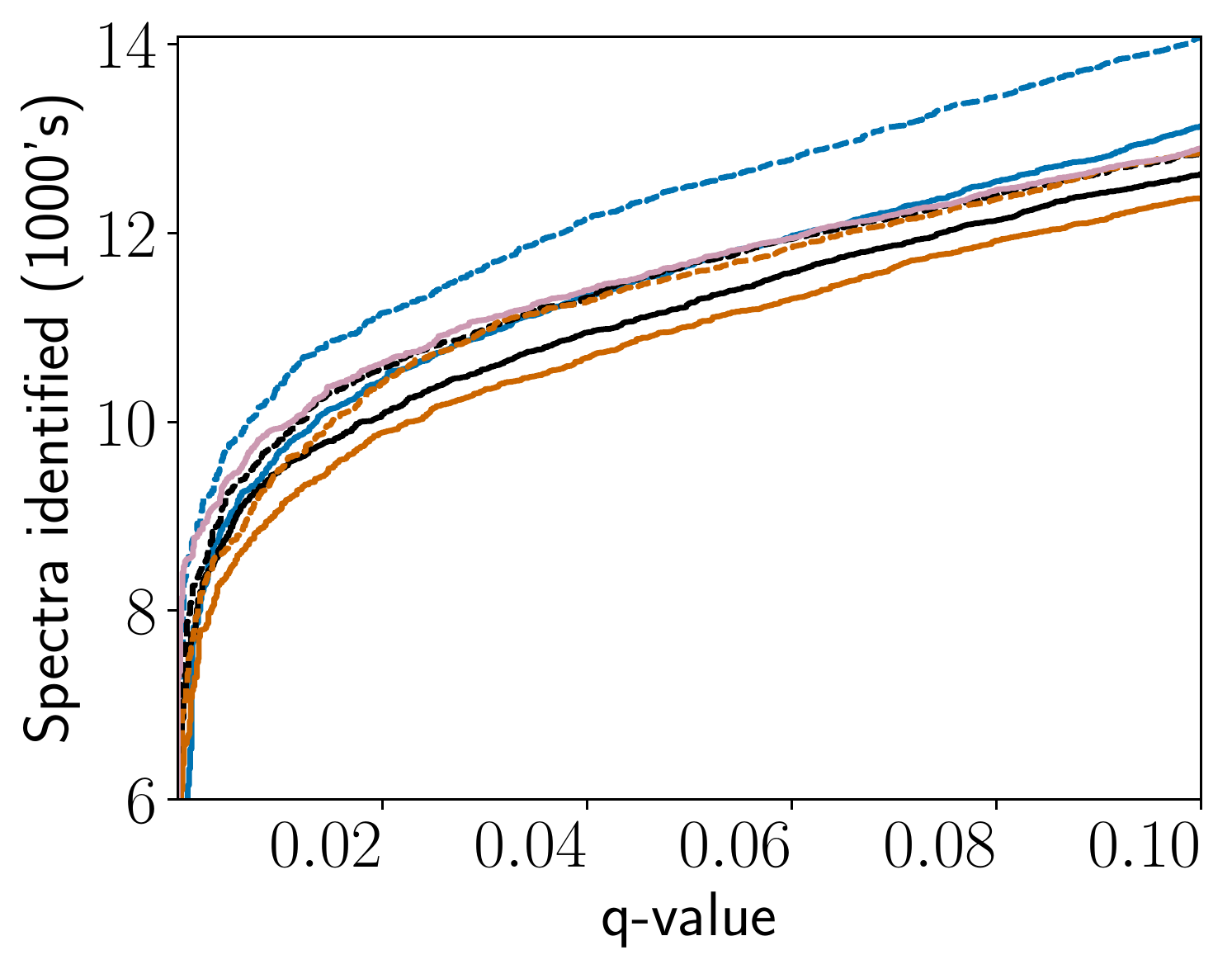}}
  \caption{Search accuracy plots measured by
    $q$-value versus number of spectra identified for worm
    (\emph{C. elegans}) and yeast (\emph{Saccharomyces cerevisiae})
    datasets.  All methods are post-processed using the
    Percolator SVM classifier~\cite{kall:semi-supervised}.  ``DRIP''
    augments the standard set of DRIP features
    with DRIP-Viterbi-path parsed PSM features (described
    in~\cite{halloran2016dynamic}) and ``DRIP Fisher''
    augments the heuristic set with gradient-based DRIP
    features.  ``XCorr,'' ``XCorr $p$-value,'' and ``MS-GF+'' use
    their standard sets of Percolator features (described
    in~\cite{halloran2016dynamic}), while ``XCorr $p$-value Fisher''
    and ``XCorr Fisher'' augment the standard XCorr feature sets with
    gradient-based Theseus features.
  }
  \label{fig:absRanking}
\end{figure}

\begin{figure}[htbp!]
  \centering
  \subfigure{\raisebox{9.0mm}{\includegraphics[trim=2.0in 1.0in 0.3in 1.7in,
    clip=true,scale=0.45]{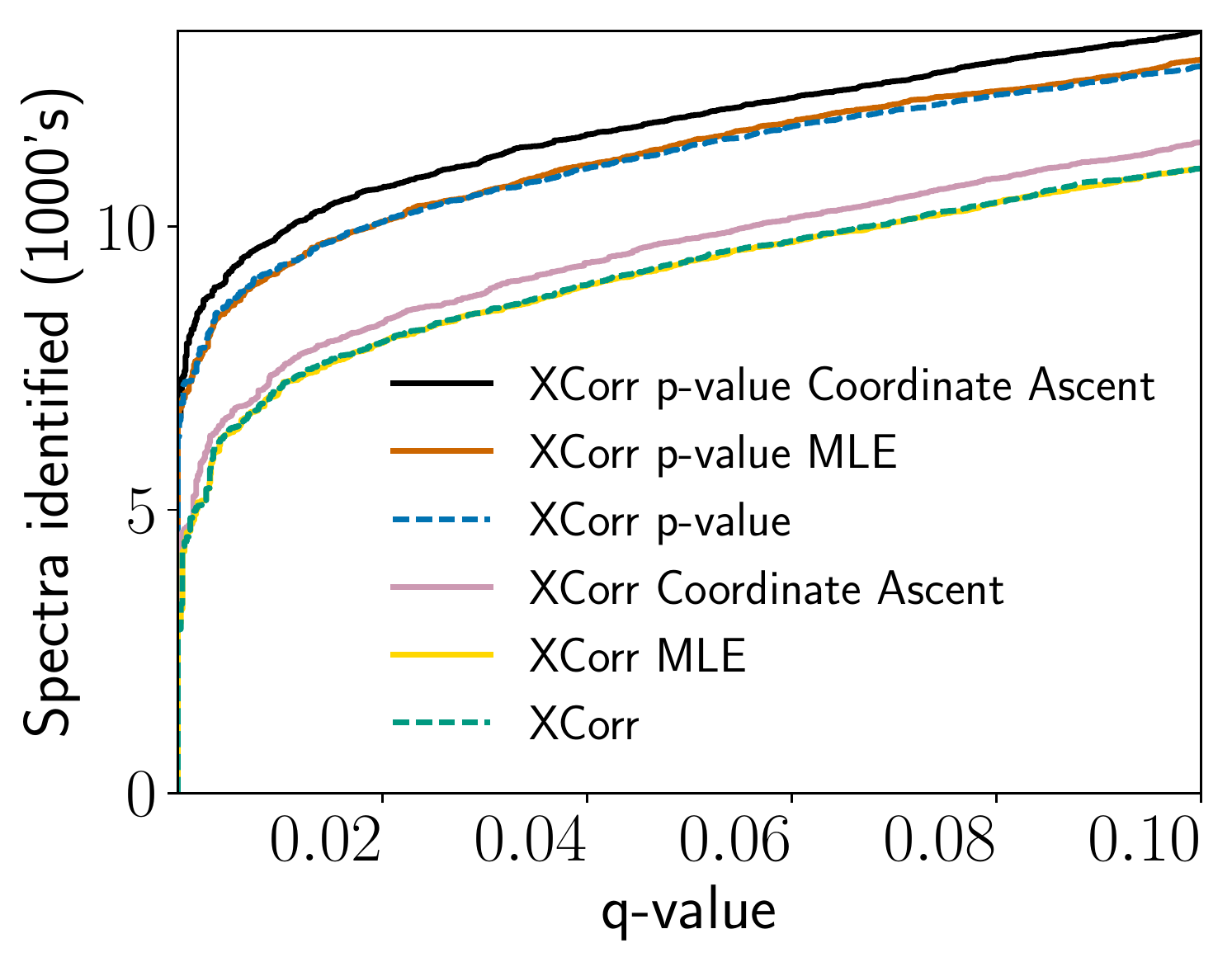}}}
  \subfigure[Yeast-1]{\includegraphics[trim=0.0in 0.0in 0.0in 0.05in,
    clip=true,scale=0.3]{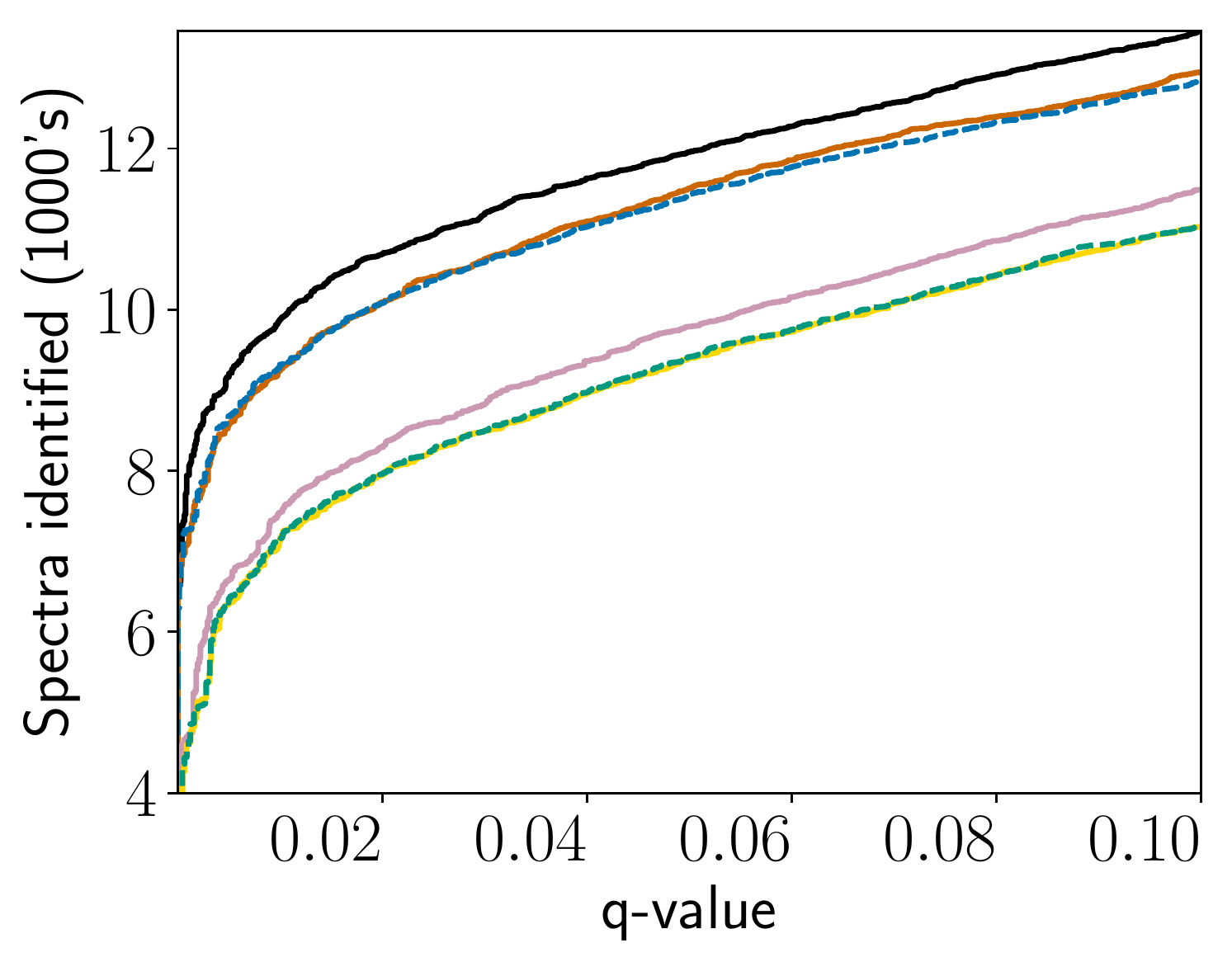}}
  \subfigure[Yeast-2]{\includegraphics[trim=0.45in 0.0in 0.0in 0.05in,
    clip=true,scale=0.3]{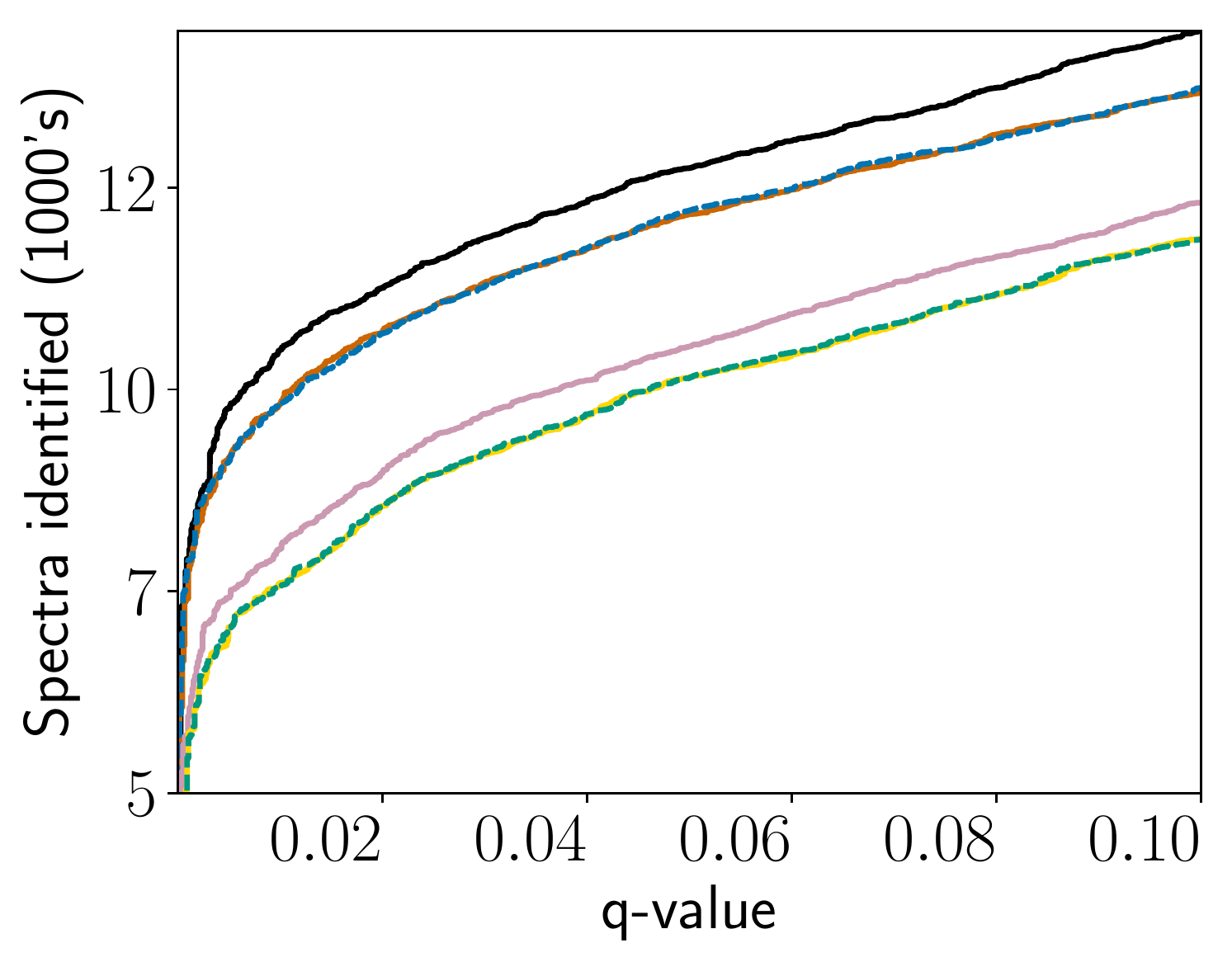}}
  \caption{Search accuracy of Theseus' learned scoring function
    parameters.  Coordinate ascent parameters are learned using
    Algorithm~\ref{algorithm:coordinateAscent} and MLE  parameters are
    learned using gradient ascent.}
  \label{fig:theseusUnsupervisedLearning}
\end{figure}

\subsection{Discussion}
DRIP gradient-based
post-processing improves upon the heuristically derived features in
all cases, and does so substantially on a majority of datasets.  In
the case of the yeast datasets, this distinguishes DRIP
post-processing performance from all competitors and leads to
state-of-the-art identification accuracy.  Furthermore, we note that
both XCorr and XCorr $p$-value post-processing performance are greatly
improved using the gradient-based features derived using Theseus,
raising performance above the highly similar MS-GF+ in several cases.
Particularly noteworthy is the substantial improvement in XCorr
accuracy which, using gradient-based information, is nearly
competitive with its $p$-value counterpart.
Considering the respective runtimes of the underlying search
algorithms, this thus presents a tradeoff for a researcher considering
search time and accuracy.  In
practice, the DRIP and XCorr $p$-value
computations are at least an order of magnitude slower than XCorr
computation in Crux~\cite{wang2016faster}.  Thus, the
presented work not only improves state-of-the-art accuracy, but also
improves the accuracy of simpler, yet significantly faster, search
algorithms.

Owing to max-product inference in graphical models, we also show that
Theseus may be used to effectively learn XCorr model
parameters (Figure~\ref{fig:theseusUnsupervisedLearning}) without
supervision.  Furthermore, we show that XCorr $p$-values are also made
more accurate by training the underlying scoring function for which
$p$-values are computed.  This marks a novel step towards unsupervised
training of uncalibrated scoring functions, as unsupervised learning
has been extensively explored for post-processor recalibration, but
has remained an open problem for MS/MS database-search scoring
functions.
The presented learning framework, as well as the
presented XCorr gradient-based feature representation, may be adapted
by many of the widely scoring functions represented by
Theseus~\cite{craig:tandem, eng:approach,  eng:comet, kim:msgfPlus,
  howbert:computing, wenger2013proteomics,
 mcilwain:crux}.  

Many exciting avenues are open for future work.  Leveraging the large
breadth of graphical models research, we plan to explore other learning paradigms using
Theseus (for instance, estimating other PSMs using $k$-best Viterbi in
order to discriminatively learn parameters using algorithms such as
max-margin learning).  Perhaps most exciting, we plan to further
investigate the peptide-to-observed-spectrum mapping derived from DRIP
Fisher scores.  Under this mapping, we plan to explore learning distance
metrics between PSMs in order to identify proteins from peptides.

\noindent {\bf Acknowledgments}: This work was supported by the
National Center for Advancing Translational Sciences (NCATS), National
Institutes of Health, through grant UL1 TR001860.
\bibliographystyle{plain}
\setcitestyle{numbers, open={[}, close={]}}
\bibliography{fisherKernel_msms_arxiv}

\begin{thebibliography}{10}

\bibitem{benjamini:controlling}
Y.~Benjamini and Y.~Hochberg.
\newblock Controlling the false discovery rate: a practical and powerful
  approach to multiple testing.
\newblock {\em Journal of the Royal Statistical Society B}, 57:289--300, 1995.

\bibitem{craig:tandem}
R.~Craig and R.~C. Beavis.
\newblock Tandem: matching proteins with tandem mass spectra.
\newblock {\em Bioinformatics}, 20:1466--1467, 2004.

\bibitem{dempster:maximum}
A.~P. Dempster, N.~M. Laird, and D.~B. Rubin.
\newblock Maximum likelihood from incomplete data via the {EM} algorithm.
\newblock {\em Journal of the Royal Statistical Society. Series B
  (Methodological)}, 39:1--22, 1977.

\bibitem{elkan2005deriving}
Charles Elkan.
\newblock Deriving tf-idf as a fisher kernel.
\newblock In {\em International Symposium on String Processing and Information
  Retrieval}, pages 295--300. Springer, 2005.

\bibitem{eng:approach}
J.~K. Eng, A.~L. McCormack, and J.~R. {Yates, III}.
\newblock An approach to correlate tandem mass spectral data of peptides with
  amino acid sequences in a protein database.
\newblock {\em Journal of the American Society for Mass Spectrometry},
  5:976--989, 1994.

\bibitem{eng:comet}
Jimmy~K Eng, Tahmina~A Jahan, and Michael~R Hoopmann.
\newblock Comet: An open-source ms/ms sequence database search tool.
\newblock {\em Proteomics}, 13(1):22--24, 2013.

\bibitem{halloran2014uai-drip}
John~T. Halloran, Jeff~A. Bilmes, and William~S. Noble.
\newblock Learning peptide-spectrum alignment models for tandem mass
  spectrometry.
\newblock In {\em Uncertainty in Artificial Intelligence (UAI)}, Quebec City,
  Quebec Canada, July 2014. AUAI.

\bibitem{halloran2016dynamic}
John~T Halloran, Jeff~A Bilmes, and William~S Noble.
\newblock Dynamic bayesian network for accurate detection of peptides from
  tandem mass spectra.
\newblock {\em Journal of Proteome Research}, 15(8):2749--2759, 2016.

\bibitem{howbert:computing}
J~Jeffry Howbert and William~S Noble.
\newblock Computing exact p-values for a cross-correlation shotgun proteomics
  score function.
\newblock {\em Molecular \& Cellular Proteomics}, pages mcp--O113, 2014.

\bibitem{jaakkolaFisherKernelNips1998}
T.~Jaakkola and D.~Haussler.
\newblock Exploiting generative models in discriminative classifiers.
\newblock In {\em Advances in Neural Information Processing Systems},
  Cambridge, MA, 1998. MIT Press.

\bibitem{jaakkola1999using}
Tommi~S Jaakkola, Mark Diekhans, and David Haussler.
\newblock Using the fisher kernel method to detect remote protein homologies.
\newblock In {\em ISMB}, volume~99, pages 149--158, 1999.

\bibitem{kall:semi-supervised}
L.~K\"{a}ll, J.~Canterbury, J.~Weston, W.~S. Noble, and M.~J. MacCoss.
\newblock A semi-supervised machine learning technique for peptide
  identification from shotgun proteomics datasets.
\newblock {\em Nature Methods}, 4:923--25, 2007.

\bibitem{keich2015improved}
Uri Keich, Attila Kertesz-Farkas, and William~Stafford Noble.
\newblock Improved false discovery rate estimation procedure for shotgun
  proteomics.
\newblock {\em Journal of proteome research}, 14(8):3148--3161, 2015.

\bibitem{keich2014importance}
Uri Keich and William~Stafford Noble.
\newblock On the importance of well-calibrated scores for identifying shotgun
  proteomics spectra.
\newblock {\em Journal of proteome research}, 14(2):1147--1160, 2014.

\bibitem{kim:msgfPlus}
Sangtae Kim and Pavel~A Pevzner.
\newblock Ms-gf+ makes progress towards a universal database search tool for
  proteomics.
\newblock {\em Nature communications}, 5, 2014.

\bibitem{mcilwain:crux}
Sean McIlwain, Kaipo Tamura, Attila Kertesz-Farkas, Charles~E Grant, Benjamin
  Diament, Barbara Frewen, J~Jeffry Howbert, Michael~R Hoopmann, Lukas
  K{\"a}ll, Jimmy~K Eng, et~al.
\newblock Crux: rapid open source protein tandem mass spectrometry analysis.
\newblock {\em Journal of proteome research}, 2014.

\bibitem{pearl:probabilistic}
J.~Pearl.
\newblock {\em Probabilistic Reasoning in Intelligent Systems : Networks of
  Plausible Inference}.
\newblock Morgan Kaufmann, 1988.

\bibitem{spivak:improvements}
M.~Spivak, J.~Weston, L.~Bottou, L.~K\"all, and W.~S. Noble.
\newblock Improvements to the {Percolator} algorithm for peptide identification
  from shotgun proteomics data sets.
\newblock {\em Journal of Proteome Research}, 8(7):3737--3745, 2009.
\newblock PMC2710313.

\bibitem{spivak:direct}
M.~Spivak, J.~Weston, D.~Tomazela, M.~J. MacCoss, and W.~S. Noble.
\newblock Direct maximization of protein identifications from tandem mass
  spectra.
\newblock {\em Molecular and Cellular Proteomics}, 11(2):M111.012161, 2012.
\newblock PMC3277760.

\bibitem{wang2016faster}
Shengjie Wang, John~T Halloran, Jeff~A Bilmes, and William~S Noble.
\newblock Faster and more accurate graphical model identification of tandem
  mass spectra using trellises.
\newblock {\em Bioinformatics}, 32(12):i322--i331, 2016.

\bibitem{wenger2013proteomics}
C.~D. Wenger and J.~J. Coon.
\newblock A proteomics search algorithm specifically designed for
  high-resolution tandem mass spectra.
\newblock {\em Journal of proteome research}, 2013.

\end{thebibliography}
\appendix
\section{DRIP Fisher Score Derivation}\label{appendix:dripFisherKernelDerivation}
Following the discussion in Section~\ref{section:dripFisherScores}, $\frac{\partial}{\partial
  \mumz(k)} \log p(s |
x,\theta) = \frac{1}{p(s | x,\theta)} \frac{\partial}{\partial
  \mumz(k)} p(s |x, \theta)$ and we have $\frac{\partial}{\partial
  \mumz(k)} p(s |x, \theta)$
{\small
\begin{align}
=& 
\frac{\partial}{\partial \mumz(k)}
\sum_{i_{1:\dripT}, \delta_{1:\dripT}} p(i_{1:\dripT}, \delta_{1:\dripT} | \theta)
=
\sum_{i_{1:\dripT}, \delta_{1:\dripT} : K_t = k, 1 \leq
  t \leq \dripT} \frac{\partial}{\partial \mumz(k)}p(i_{1:\dripT}, \delta_{1:\dripT}
| \theta) \nonumber \\
=&
\sum_{i_{1:\dripT}, \delta_{1:\dripT}} \indicator_{\{ K_t = k \}}
\prod_{t: K_t \neq k}
\phi (\delta_t, K_{t-1}, i_t, i_{t-1})
\frac{\partial}{\partial \mumz(k)} 
\prod_{t: K_t = k} \phi (\delta_t, K_{t-1}, i_t,
i_{t-1}) \nonumber \\
=&
\sum_{i_{1:\dripT}, \delta_{1:\dripT}} \indicator_{\{ K_t = k \}}
\prod_{t: K_t \neq k}
\phi (\delta_t, K_{t-1}, i_t, i_{t-1})
\left (\prod_{t: K_t = k}\frac{\phi (\delta_t, K_{t-1},
  i_t, i_{t-1})}{p(\Obsmz_t | K_t)} \right)
\left (\frac{\partial}{\partial \mumz(k)}  
\prod_{t: K_t = k}p(\Obsmz_t | K_t) \right) \nonumber\\ 
=&
\sum_{i_{1:T}, \delta_{1:T}} \indicator_{\{ K_t = k \}}
\prod_{t}
\phi (\delta_t, K_{t-1}, i_t, i_{t-1})
\left ( \prod_{t: K_t = k}\frac{1}{p(\Obsmz_t
    | K_t)} \right )
\left (\frac{\partial}{\partial \mumz(k)}  
\prod_{t: K_t = k} p(\Obsmz_t | K_t) \right ) \nonumber \\
=&
\sum_{i_{1:\dripT}, \delta_{1:\dripT}} \indicator_{\{ K_t = k \}}
p(s| x,\theta)
\left ( \prod_{t: K_t = k}\frac{1}{p(\Obsmz_t | K_t)} \right )
\left (\frac{\partial}{\partial \mumz(k)}  
\prod_{t: K_t = k}p(\Obsmz_t | K_t)
\right ), \nonumber
\end{align}
}
where
{\small
\begin{align*}
\frac{\partial}{\partial \mumz(k)}  
\prod_{t: K_t = k}p(\Obsmz_t | K_t) =&
\left (
\prod_{t: K_t = k}p(\Obsmz_t | K_t) 
\right )
\left (
\sum_{t: K_t = k}
\frac{
\frac{\partial}{\partial \mumz(k)}
  \sum_{i_t = 0}^{1}p(i_t)p(\Obsmz_t | K_t, i_t)}
{p(\Obsmz_t | K_t)}
\right)
\\
=& \left (
\prod_{t: K_t = k} p(\Obsmz_t | K_t)
\right )
\left (
\sum_{t: K_t = k} 
  \frac{p(i_t = 0)p(\Obsmz_t | K_t, i_t = 0)\frac{(\Obsmz_t -
    \mumz(k))}{\mzVar}}
{p(\Obsmz_t | K_t)}
 \right)
\\
=& \left (
\prod_{t: K_t = k}
p(\Obsmz_t | K_t)
\right )
\left (
\sum_{t: K_t = k} 
p(i_t = 0 | K_t, \Obsmz_t)\frac{(\Obsmz_t -
    \mumz(k))}{\mzVar}
 \right )
.
\end{align*}
}
{\small
\begin{align}
\Rightarrow 
\frac{\partial}{\partial \mumz(k)} \log p(s | x, \theta) 
=&
\frac{1}{p(s|x,\theta)}\sum_{i_{1:\dripT}, \delta_{1:\dripT}} 
\indicator_{\{ K_t =
  k \}}
p(i_{1:\dripT,} K_{1:\dripT} | \theta)
\sum_{t: K_t = k} 
  p(i_t = 0 | K_t, \Obsmz_t)\frac{(\Obsmz_t -
    \mumz(k))}{\mzVar} \nonumber\\
=&\frac{1}{p(s|x,\theta)} \sum_{t=1}^\dripT
p(i_{t}, K_{t} = k | \theta)
  p(i_t = 0 | K_t, \Obsmz_t)\frac{(\Obsmz_t -
    \mumz(k))}{\mzVar} \nonumber\\
=& \sum_{t = 1}^\dripT
p(i_{t}, K_{t} = k | s, \theta)
  p(i_t = 0 | K_t, \Obsmz_t)\frac{(\Obsmz_t -
    \mumz(k))}{\mzVar},\label{eqn:fisherScoreMzMean}
\end{align}}
\noindent where we equivalently write $p(s | x,\theta) = p(i_{1:\dripT},\delta_{1:\dripT} |
\theta) = p(i_{1:\dripT}, K_{1:\dripT} | \theta)$ due to the deterministic
relationship $\delta_t = K_t - K_{t-1}$.  

\section{Theseus Unsupervised Learning using Coordinate Ascent}\label{appendix:monotonicProof}
Using the model's Fisher scores, Theseus parameters $\theta$ may be
learned via maximum likelihood estimation.  We present an alternate
learning algorithm (compared to maximum likelihood learning in Section~\ref{section:results}).  Let $s^1, s^2, \dots, s^n$
be a dataset of spectra and define $J(x^1, \dots, x^n, \theta) =
\sum_{i=1}^n \log p(s^i, x^i | \theta)$.  Optimizing this objective
function, Theseus' coordinate ascent learning algorithm is defined in
Algorithm~\ref{algorithm:coordinateAscent} where, rather than relying
on training labels, we use max-product inference to infer the most
probable PSM for each spectrum given the current iteration's
parameters, then maximize the log-likelihood with respect to $\theta$
given the most likely PSMs.  We now prove that
Algorithm~\ref{algorithm:coordinateAscent} converges monotonically.
\begin{theorem}\label{theorem:coordinateAscentMonotoneConvergence}
Algorithm~\ref{algorithm:coordinateAscent} converges
monotonically to a local optimum.
\end{theorem}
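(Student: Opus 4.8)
The plan is to recognize Algorithm~\ref{algorithm:coordinateAscent} as two-block coordinate ascent on the objective $J(x^1,\dots,x^n,\theta)=\sum_{i=1}^n \log p(s^i,x^i\mid\theta)$ defined above, one block being the tuple of latent PSMs $x^{1:n}$ and the other the parameter vector $\theta$. I would first record the two elementary monotonicity facts, then use boundedness of $J$ along the run to get convergence of the objective values, and finally read off from the fixed-point conditions that the limit is a coordinate-wise (hence local) optimum. The Fisher scores derived in Section~\ref{section:theseus} are what make the $\theta$-block step implementable, and max-product inference is what makes the $x$-block step implementable, but neither is needed for the convergence argument itself.

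For monotonicity, write $(\hat x^{1:n}_{(k)},\theta_{(k)})$ for the iterate at the start of pass $k$ and $J_{(k)}=J(\hat x^{1:n}_{(k)},\theta_{(k)})$. The inner \textbf{for} loop replaces each $\hat x^i$ by $\argmax_{x^i\in\cP}\log p(s^i,x^i\mid\theta_{(k)})$; since $J$ decomposes as a sum over $i$ whose $i$-th term depends only on $x^i$ given $\theta$, this is an exact block maximization, so $J(\hat x^{1:n}_{(k+1)},\theta_{(k)})\ge J(\hat x^{1:n}_{(k)},\theta_{(k)})$. The assignment $\theta_{(k+1)}=\argmax_\theta J(\hat x^{1:n}_{(k+1)},\theta)$ (reached by gradient ascent, which attains the maximum since each $\log p(s^i,x^i\mid\theta)$ is linear, hence concave, in $\theta$) gives $J_{(k+1)}\ge J(\hat x^{1:n}_{(k+1)},\theta_{(k)})$. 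Chaining, $J_{(k+1)}\ge J_{(k)}$. To conclude convergence of the value sequence, note that $\log p(s^i,x^i\mid\theta)$ is $-\infty$ unless $x^i$ lies in the finite set of length-$n$ peptides matching one of the $M$ admissible precursor masses, so the effective search space in the inner loop is finite; together with the standard box or regularization constraint on $\theta$ required for the $\argmax_\theta$ step to be attained, $J$ is bounded above along the run, and the monotone sequence $J_{(k)}$ converges.

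For the claim that the limit is a local optimum: because only finitely many distinct tuples $\hat x^{1:n}$ can ever appear and $J$ is nondecreasing, once $J_{(k)}$ has converged the inner loop can no longer strictly increase $J$, so the terminal tuple $\hat x^{1:n}_\star$ is a global maximizer of $J(\cdot,\theta_\star)$ in each coordinate $x^i$ over $\cP$; and $\theta_\star$, being an exact maximizer of the smooth map $\theta\mapsto J(\hat x^{1:n}_\star,\theta)$ (a maximum over the finitely many fixed-$x$ smooth log-likelihoods), satisfies $\nabla_\theta J(\hat x^{1:n}_\star,\theta_\star)=0$. Hence no change of a single latent PSM and no local perturbation of $\theta$ can increase $J$, which is exactly the asserted coordinate-wise local optimality.

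The main obstacle is not the monotonicity, which is immediate, but pinning down what ``converges to a local optimum'' means when one block is discrete: strictly one should argue that the iterates themselves (not merely $J_{(k)}$) stabilize, and one must be explicit that ``local optimum'' here denotes partial optimality — global in each $x^i$ and stationary in $\theta$ — rather than optimality in a Euclidean neighborhood, which is meaningless for the discrete block. I would also flag two modeling points the argument silently uses: that $\argmax_\theta J(\hat x^{1:n},\theta)$ is attained (needed because the XCorr-type log-likelihood is linear, hence unbounded above in $\theta$ absent a constraint or penalty), and that, since $S_i$ is a virtual-evidence child and $p(s\mid\theta)$ is therefore unnormalized, the correct object for this bookkeeping is the joint quantity $J$, not a marginal likelihood.
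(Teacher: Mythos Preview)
Your approach is essentially the same as the paper's: both treat Algorithm~\ref{algorithm:coordinateAscent} as two-block coordinate ascent on $J(x^{1:n},\theta)=\sum_i \log p(s^i,x^i\mid\theta)$ and chain the two block-maximization inequalities to obtain $J_{(k+1)}\ge J_{(k)}$. The paper's proof in fact stops there, establishing only monotonicity of the objective; it does not argue boundedness, convergence of the value sequence, or what ``local optimum'' means in this mixed discrete/continuous setting. Your proposal goes beyond the paper in precisely these places --- the finiteness of the admissible $x$-block, the need for the $\theta$-argmax to be attained (which, as you note, is not automatic for a linear-in-$\theta$ log-likelihood without a constraint), and the interpretation of the limit as coordinate-wise optimality --- so your write-up is, if anything, more complete than the published argument.
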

\begin{proof}
We need to show that the objective function $J$ is nondecreasing with
each iteration of the algorithm.  Denote the learned parameters at iteration
$k$ of the algorithm as $\theta_k$ and define $\hat{x}^i_k =
\argmax_{x^i \in \cP}\log p(s^i, x^i | \theta_{k-1}), \theta_k =
\argmax_{\theta} J(\hat{x}^1_k, \dots, \hat{x}^1_k, \theta)$.  We
thus have
\begin{align*}
J(\hat{x}^1_{k},\dots, \hat{x}^n_{k}, \theta_{k}) &\geq J(\hat{x}^1_{k},\dots, \hat{x}^n_{k},
\theta_{k-1})\\
J(\hat{x}^1_{k},\dots, \hat{x}^n_{k}, \theta_{k-1}) &\geq J(x^1,\dots, x^n,
\theta_{k-1}), \; \forall x^1,\dots,x^n \in \cP\\
\Rightarrow J(\hat{x}^1_{k},\dots, \hat{x}^n_{k}, \theta_{k}) &\geq
J(\hat{x}^1_{k-1},\dots, \hat{x}^n_{k-1}, \theta_{k-1})
\end{align*}
\end{proof}

\vspace{-0.2in}
\section{Impact of Recalibration over Standard DRIP Search}\label{appendix:impact}
\vspace{-0.2in}

\begin{table}[htbp!]
\caption{{\small Percent improvement over uncalibrated search results for the
  DRIP methods plotted in Figure~\ref{fig:dripAbsRanking}, at an
  FDR threshold $t=1\%$.  Largest improvement highlighted in bold.
  Note that at this FDR threshold, Percolator post-processing using a
  standard set of features may result in diminished performance
  (Worm-3).}}
  \label{table:datasets}
\centering
\begin{tabular}{cccc}
\hline
Data set&DRIP&DRIP Heuristic&DRIP Fisher\\
\hline
Yeast-1& 5.4 & 10.7 & {\bf 14.8}\\
Yeast-2& 5.2 & 8.3 & {\bf 16.6}\\
Yeast-3& 9.2 & 10.9 & {\bf 17.7}\\
Yeast-4& 3.4 &  7.5 & {\bf 15.1}\\
Worm-1& 10.1 & 17.4 & {\bf 20.8} \\
Worm-2& 1.1 &  6.7 & {\bf 11.3 }\\
Worm-3& -5.1 & 7.2 & {\bf 11 }\\
Worm-4& 0.4 & 9.9 & {\bf 16}\\\hline
Average& 3.7 & 9.8 & {\bf 15.4}\\
\hline
\end{tabular} 
\end{table}

\begin{figure}[htbp!]
\vspace{-0.2in}
  \centering
  \subfigure{\raisebox{8.5mm}{\includegraphics[trim=2.7in 1.0in 0.2in 1.85in,
    clip=true,scale=0.5]{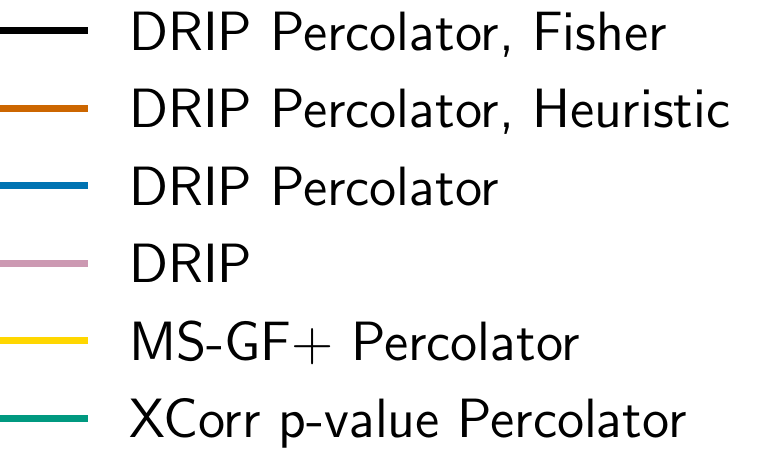}}}
  \subfigure[Worm-1]{\includegraphics[trim=0.0in 0.0in 0.0in 0.05in,
    clip=true,scale=0.28]{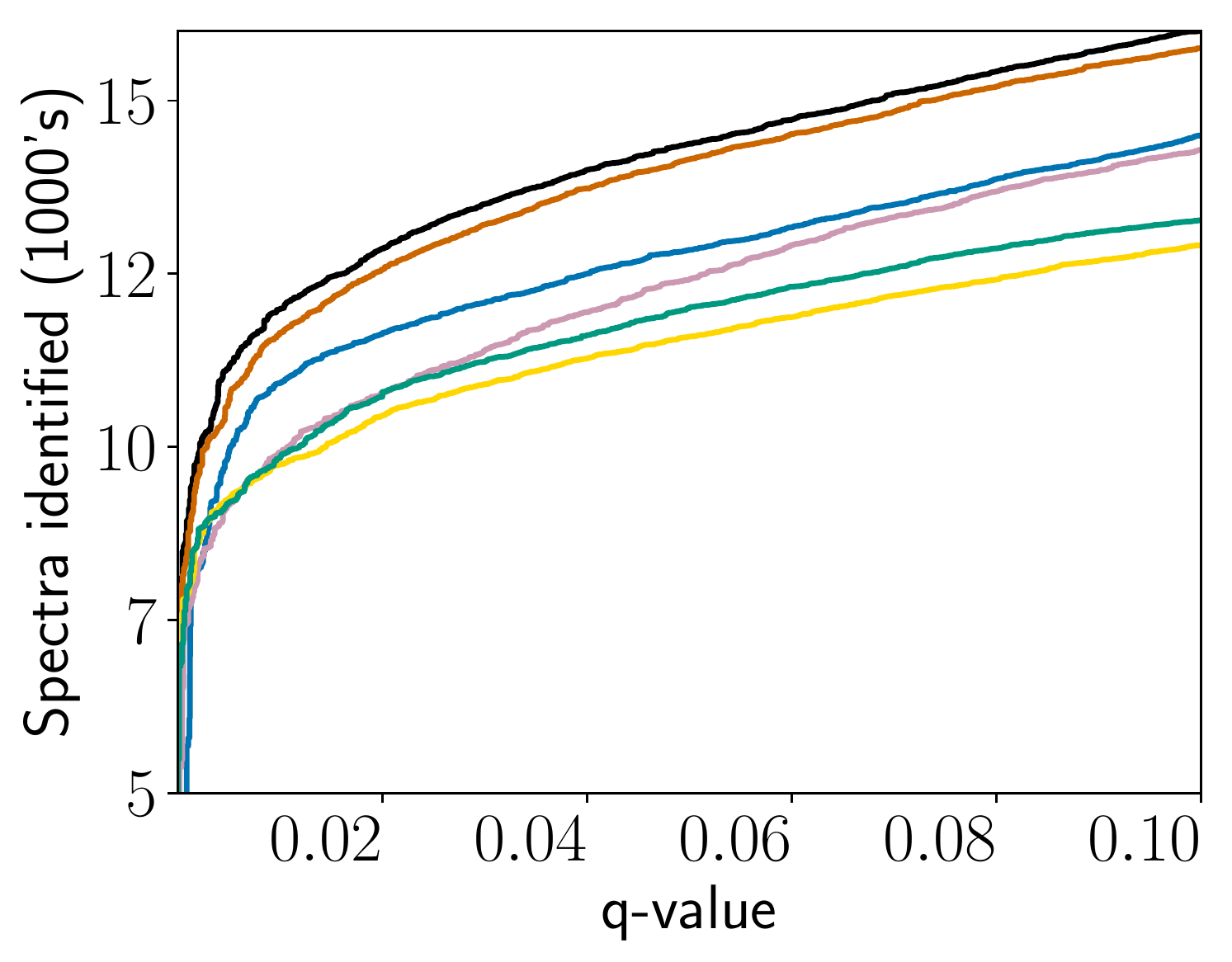}}
  \subfigure[Worm-2]{\includegraphics[trim=0.45in 0.0in 0.0in 0.05in,
    clip=true,scale=0.28]{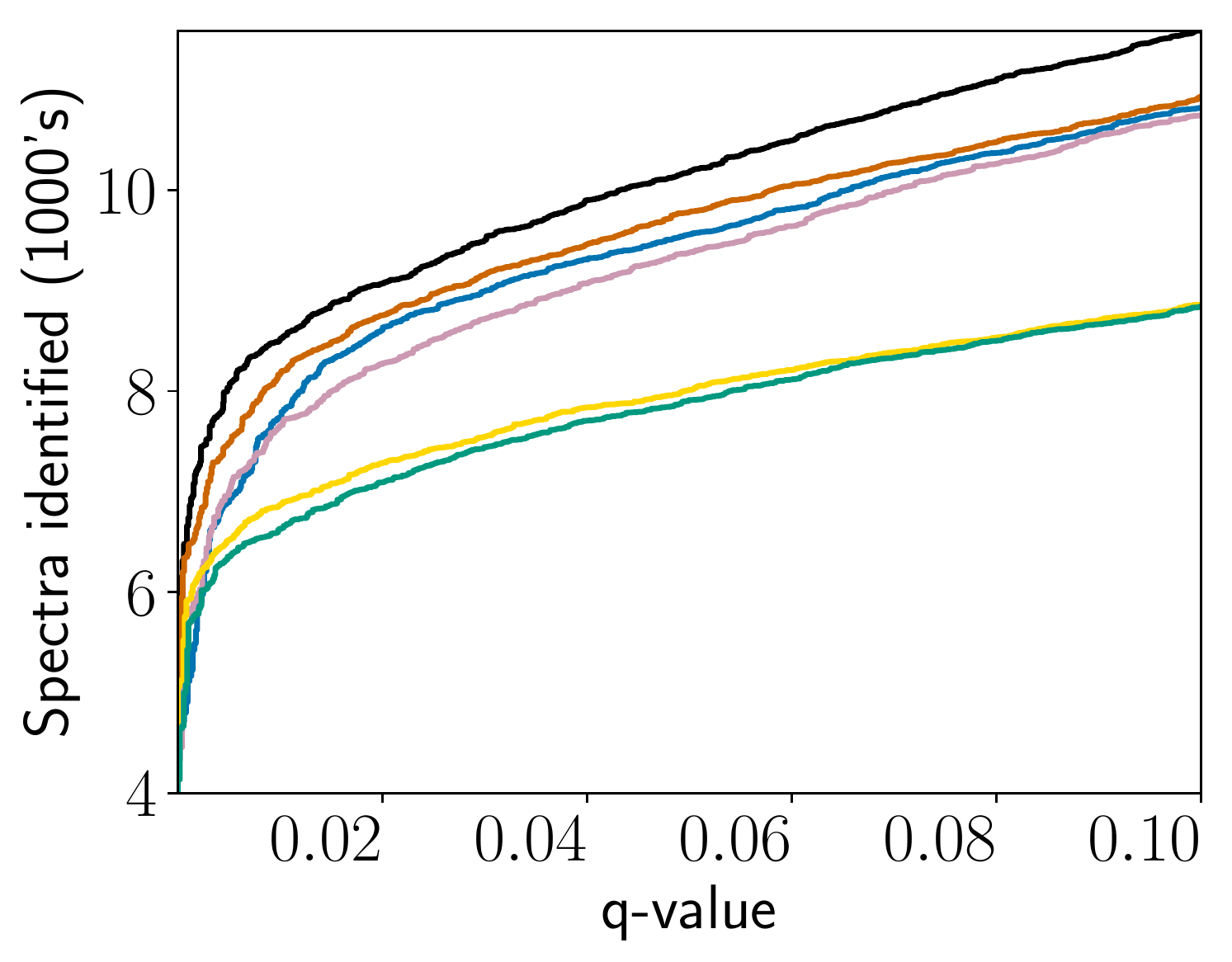}}
  \subfigure[Worm-3]{\includegraphics[trim=0.0in 0.0in 0.0in 0.05in,
    clip=true,scale=0.28]{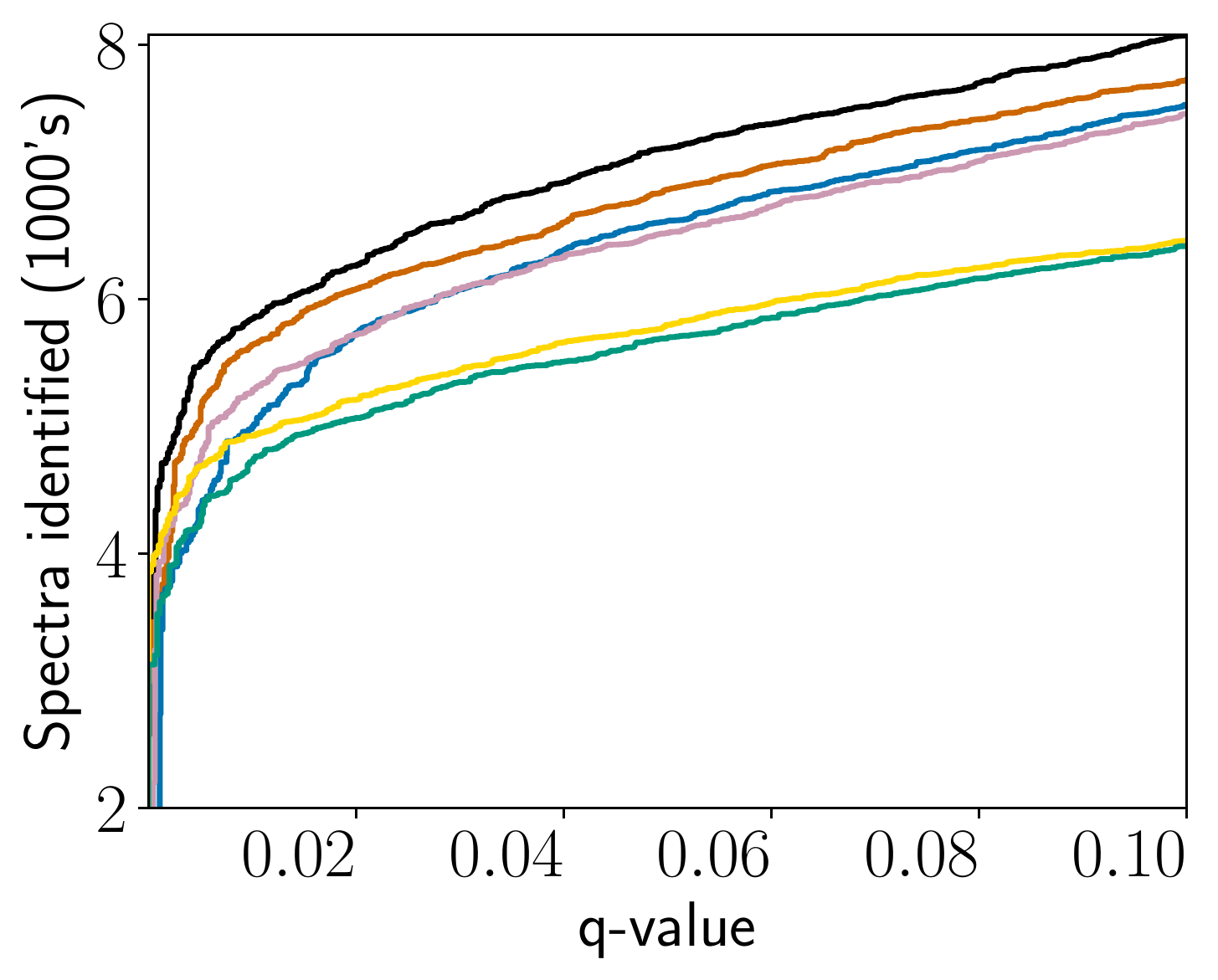}}
  \subfigure[Worm-4]{\includegraphics[trim=0.45in 0.0in 0.0in 0.05in,
    clip=true,scale=0.28]{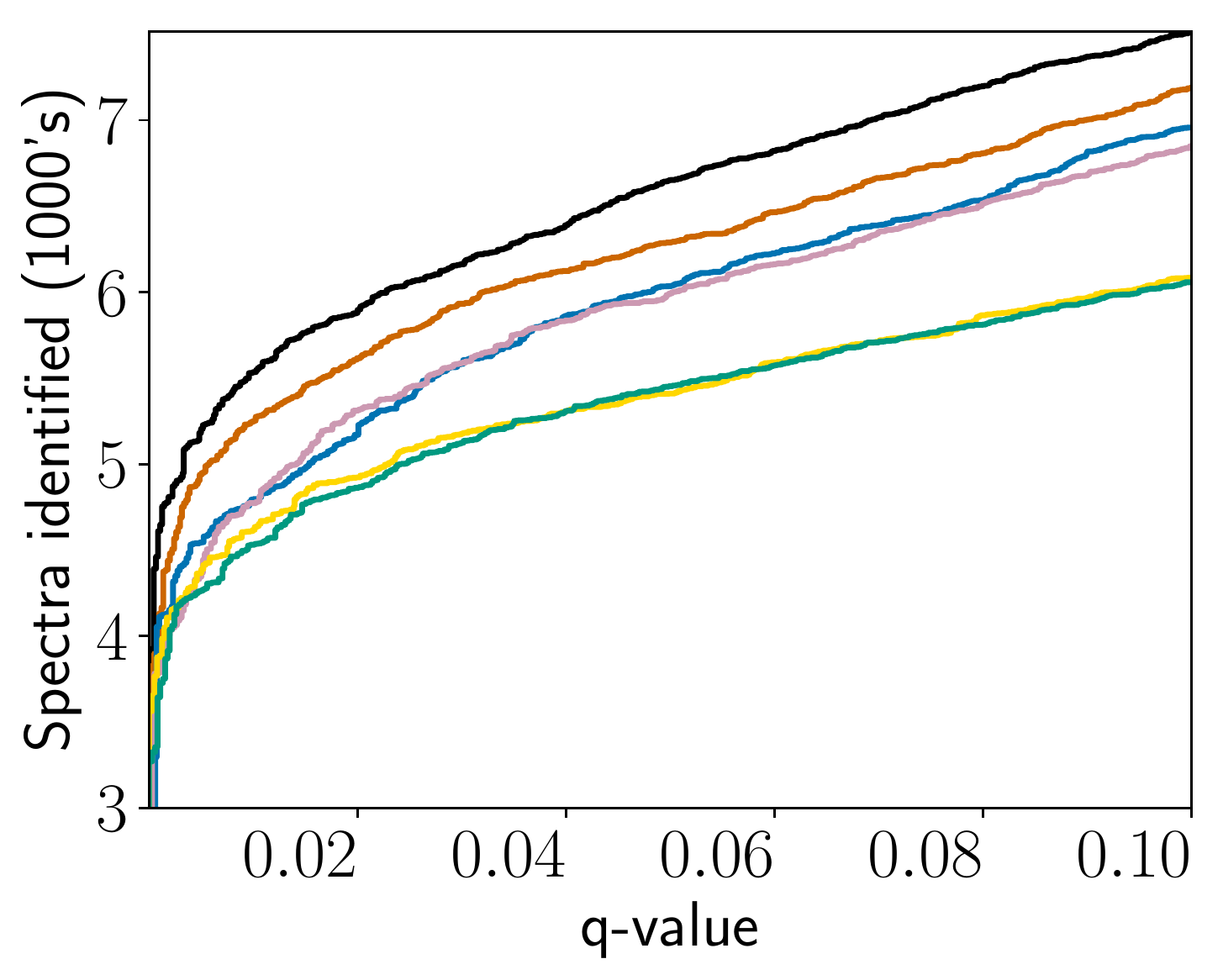}}
  \subfigure[Yeast-1]{\includegraphics[trim=0.45in 0.0in 0.0in 0.05in,
    clip=true,scale=0.28]{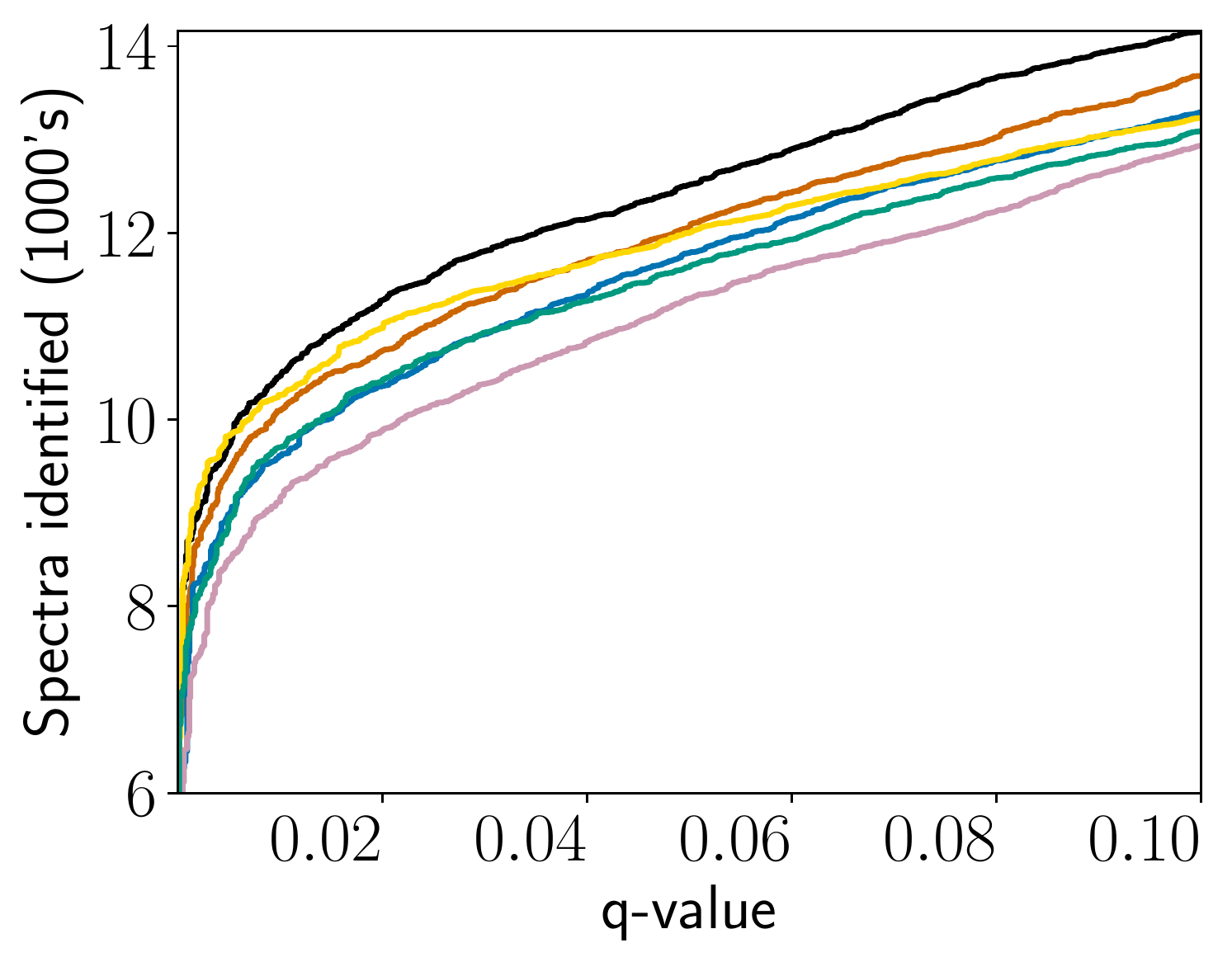}}
  \subfigure[Yeast-2]{\includegraphics[trim=0.0in 0.0in 0.0in 0.05in,
    clip=true,scale=0.28]{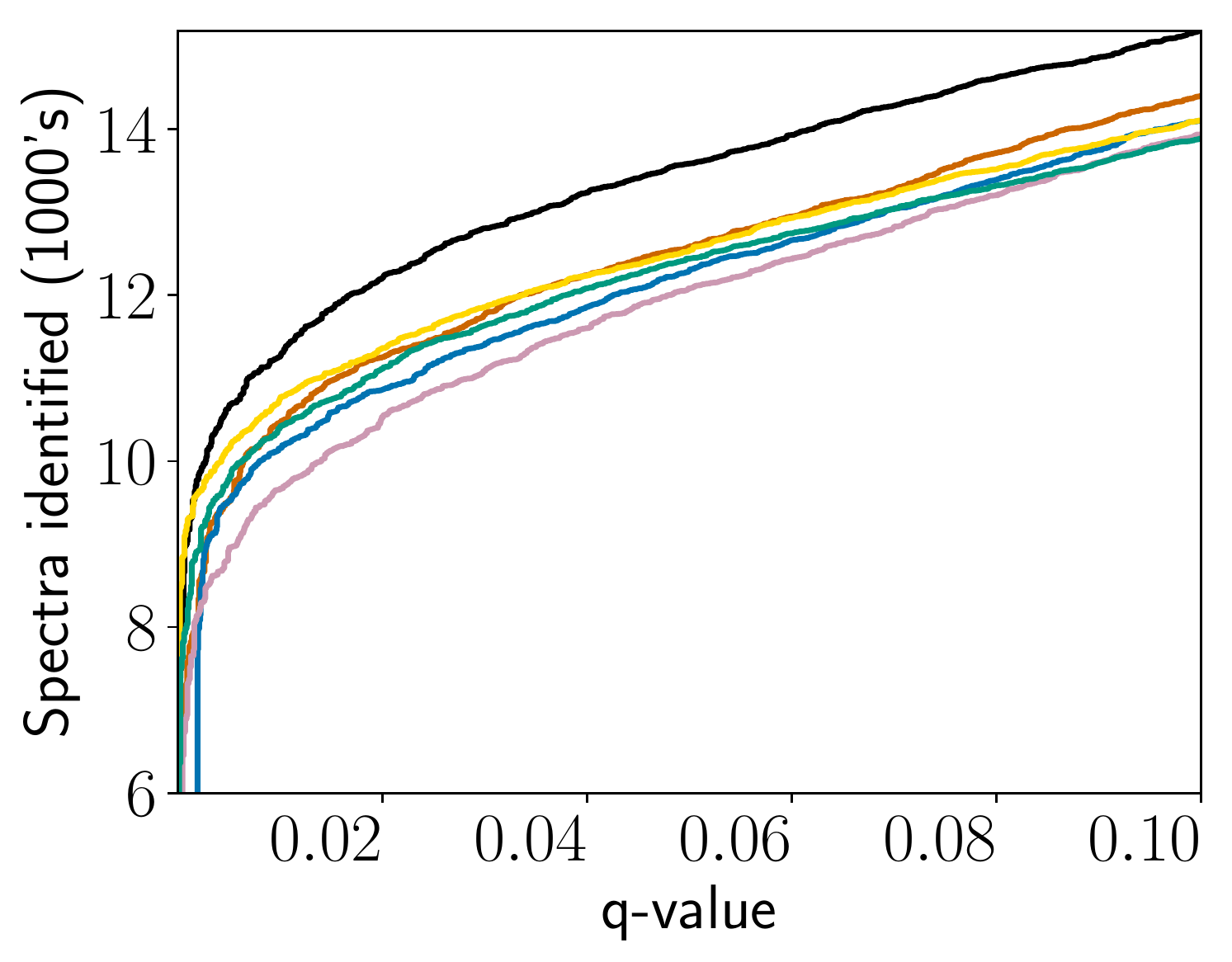}}
  \subfigure[Yeast-3]{\includegraphics[trim=0.45in 0.0in 0.0in 0.05in,
    clip=true,scale=0.28]{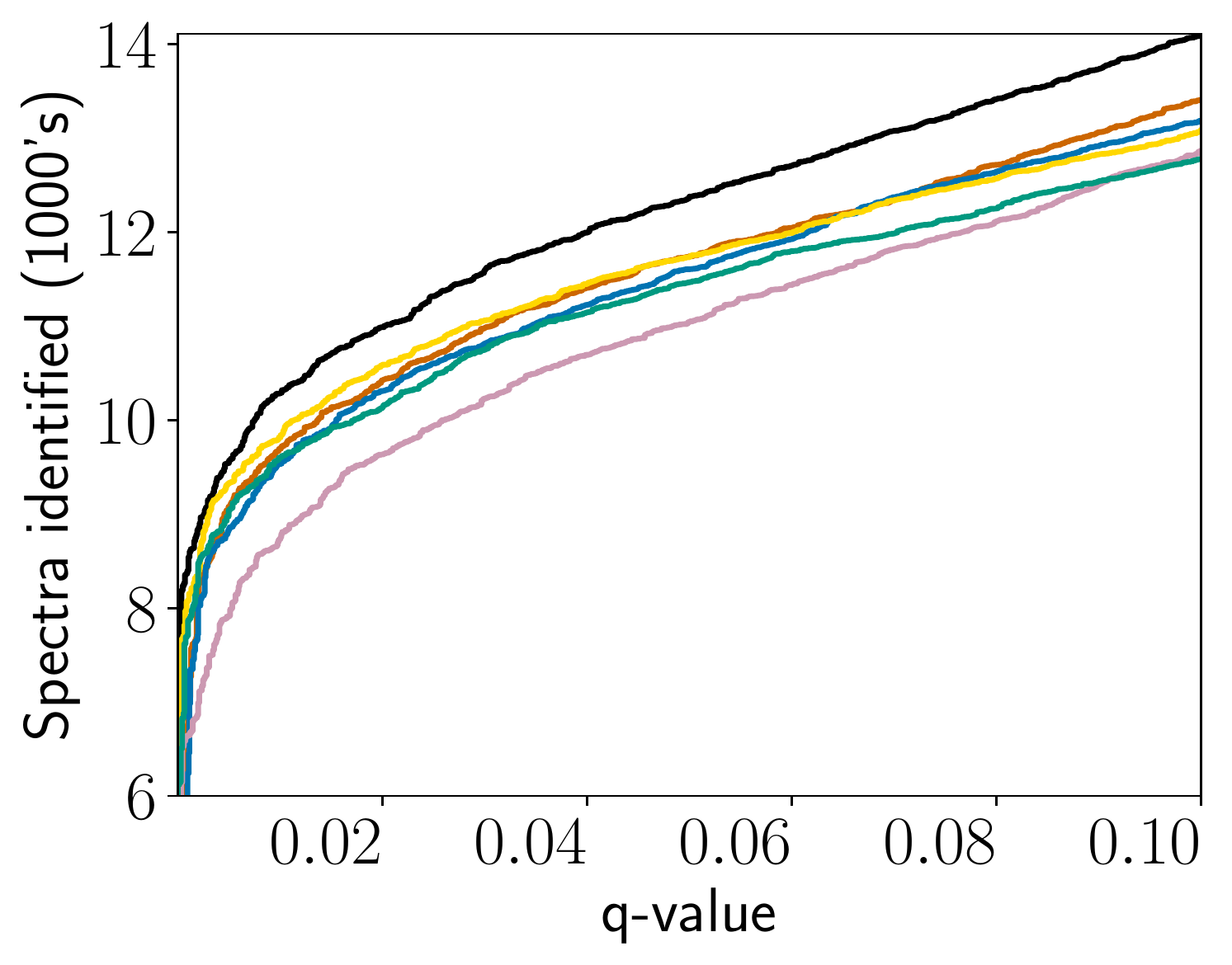}}
  \subfigure[Yeast-4]{\includegraphics[trim=0.45in 0.0in 0.0in 0.05in,
    clip=true,scale=0.28]{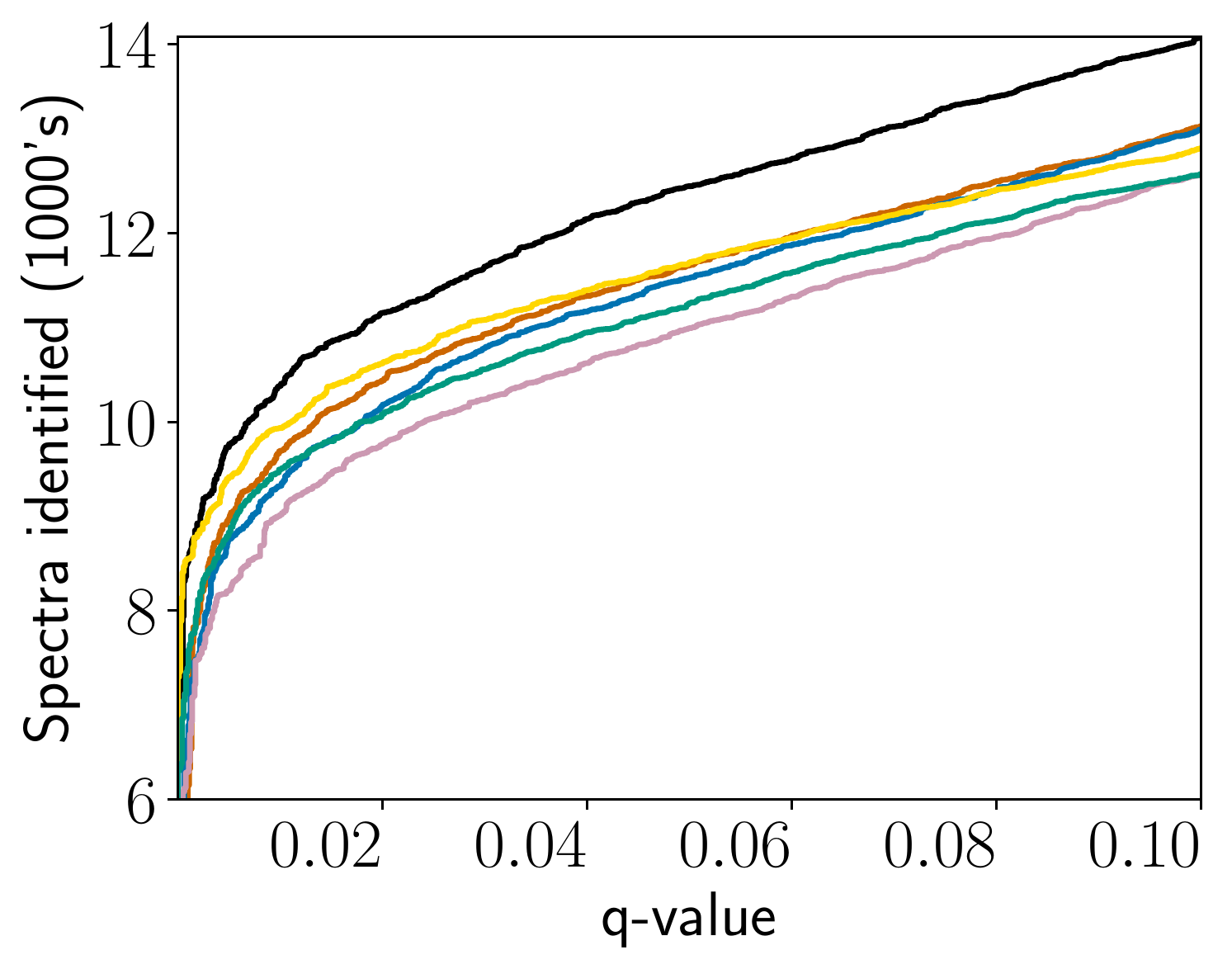}}
  \caption{{\small Performance increase of DRIP search after recalibration.
    Methods denoted by ``Percolator'' are post-processed using the
    Percolator SVM classifier~\cite{kall:semi-supervised}, otherwise
    the raw PSM scores of the denoted search algorithm are used for
    identification.  ``DRIP
    Percolator'' uses the standard set of DRIP PSM features described
    in~\cite{halloran2016dynamic}, ``DRIP Percolator, Heuristic''
    augments the standard set with DRIP-Viterbi-path parsed PSM features described
    in~\cite{halloran2016dynamic}, and ``DRIP Percolator, Fisher''
    augments the Heuristic set with the gradient-based DRIP features
    to the standard.  XCorr $p$-value and MS-GF+ use their standard
    set of Percolator features, described
    in~\cite{halloran2016dynamic}.  Search accuracy plots measured by
    $q$-value versus number of spectra identified for yeast
    (\emph{Saccharomyces cerevisiae}) and worm (\emph{C. elegans})
    datasets.}}
  \label{fig:dripAbsRanking}
\end{figure}
\end{document}